\newif\ifvldb
\newif\ifsubqueries
\setlist{nosep}
\definecolor{midnightblue}{rgb}{0.1, 0.1, 0.44}
\definecolor{darkslateblue}{rgb}{0.28, 0.24, 0.55}
\definecolor{smokyblack}{rgb}{0.06, 0.05, 0.03}
\definecolor{xanadu}{rgb}{0.25, 0.23, 0.27}
\scriptsize\fontfamily{lmtt}\selectfont,
\renewcommand{\paragraph}[1]{\vspace*{0.5mm}\noindent\textbf{#1.}}
\newcommand{\case}[1]{\vspace*{1mm}\noindent\textbf{#1.}}
\newcommand{\flex}{{\sc Flex}\xspace}
\newcommand{\flexlarge}{{\large \flex}\xspace}
\newcommand{\allegro}{{\sc Allegro}\xspace}
\newcommand{\numqueries}{8.1 million\xspace}
\newcommand{\numqueriesrounded}{8.1 million\xspace}
\newcommand{\note}[1]{{\color{gray}\underline{Note:} \textit{#1}}}
\newcommand{\todo}[1]{{\color{red}\textbf{TODO: #1}}}
\renewcommand{\note}[1]{}
\renewcommand{\todo}[1]{}
\newcommand{\select}{\texttt{Select}\xspace}
\newcommand{\join}{\texttt{Join}\xspace}
\newcommand{\staticsensitivity}{elastic sensitivity\xspace}
\newcommand{\Staticsensitivity}{Elastic sensitivity\xspace}
\newcommand{\StaticSensitivity}{Elastic Sensitivity\xspace}
\newcommand{\vi}{\textsf{vr}}
\newcommand{\mf}{\textsf{mf}}
\newcommand{\sfun}{\hat{S}}
\newcommand{\dsfun}{\hat{\Delta S}}
\newtheorem{theorem}{Theorem}
\newtheorem{lemma}{Lemma}
\newtheorem{definition}{Definition}
\newcommand{\argmin}{\operatornamewithlimits{argmin}}
\newcommand{\rmspc}{\vspace*{-3mm}}
\newcommand{\rmspcsm}{\vspace*{-1mm}}
\g@addto@macro\normalsize{%
  \setlength\abovedisplayskip{-5pt}
  \setlength\belowdisplayskip{4pt}
  \setlength\abovedisplayshortskip{-5pt}
  \setlength\belowdisplayshortskip{4pt}
}
\let\OLDthebibliography\thebibliography
\renewcommand\thebibliography[1]{
  \OLDthebibliography{#1}
  \setlength{\parskip}{0pt}
  \setlength{\itemsep}{1pt plus 0.3ex}
}
\begin{document}

\vldbTitle{Towards Practical Differential Privacy for SQL Queries}
\vldbAuthors{Noah Johnson, Joseph P. Near, Dawn Song}
\vldbVolume{11}
\vldbNumber{5}
\vldbYear{2018}
\vldbDOI{https://doi.org/10.1145/3177732.3177733}

\ifvldb
\else
\toappear{..}
\fi

\title{Towards Practical Differential Privacy for SQL Queries}

\numberofauthors{3}
\author{
\alignauthor
Noah Johnson\\
\affaddr{University of California, Berkeley}\\
\email{noahj@berkeley.edu}
\alignauthor
Joseph P. Near\\
\affaddr{University of California, Berkeley}\\
\email{jnear@berkeley.edu}
\alignauthor
Dawn Song \\
\affaddr{University of California, Berkeley}\\
\email{dawnsong@cs.berkeley.edu}
}

\ifvldb
\pagenumbering{gobble}
\else
\fi

\maketitle

\begin{abstract}
  Differential privacy promises to enable general data analytics while
  protecting individual privacy, but existing differential privacy
  mechanisms do not support the wide variety of features and databases
  used in real-world SQL-based analytics systems.

  This paper presents the first practical approach for differential
  privacy of SQL queries. Using \numqueries real-world queries, we conduct
  an empirical study to determine the requirements for practical differential privacy,
  and discuss limitations of previous approaches in light of these requirements.
  To meet these requirements we propose
  \staticsensitivity, a novel method for approximating the local
  sensitivity of queries with general equijoins. We prove that elastic sensitivity
  is an upper bound on local sensitivity and can therefore be used to
  enforce differential privacy using any local sensitivity-based mechanism.
  
  We build \flex, a practical end-to-end system to enforce differential privacy
  for SQL queries using elastic sensitivity.
  We demonstrate that FLEX is compatible with any existing database,
  can enforce differential privacy for real-world SQL queries,
  and incurs negligible (0.03\%) performance overhead.
\end{abstract}

\section{Introduction}

As organizations increasingly collect sensitive information about
individuals, these organizations are ethically and legally obligated
to safeguard against privacy leaks. Data analysts within these
organizations, however, have come to depend on unrestricted access to
data for maximum productivity. This access is frequently provided in
the form of a relational database that supports SQL queries.
Current approaches for data security and privacy cannot guarantee privacy 
for individuals while providing general-purpose access for the analyst.

As demonstrated by recent insider attacks~\cite{uci_medical,
  cpmc_medical, morganstanley, swiss}, allowing members of an
organization unrestricted access to data is a major cause of privacy
breaches. Access control policies can limit access to a particular
database, but once an analyst has access, these policies cannot
control how the data is used.
Data anonymization attempts to provide privacy while allowing
general-purpose analysis, but cannot be relied upon, as demonstrated
by a number of re-identification
attacks~\cite{sweeney1997weaving,DBLP:journals/corr/abs-cs-0610105,taxis,de2013unique}.

Differential privacy~\cite{dworkdifferential2006,
  dwork2006calibrating, dwork2014algorithmic} is a promising technique
  for addressing these issues. Differential privacy allows general statistical analysis of data while
protecting \emph{data about individuals} with a strong formal
guarantee of privacy.

Because of its desirable formal guarantees, differential privacy has received
growing attention from organizations including
Google and Apple. However, research on practical
techniques for differential privacy has focused on special-purpose use cases,
such as collecting statistics about web browsing behaviors~\cite{erlingsson2014rappor} 
and keyboard and emoji use~\cite{apple}, while differential
privacy for general-purpose data analytics remains an open challenge.

Various mechanisms~\cite{mcsherry2009privacy, proserpio2014calibrating,
  nissim2007smooth, mohan2012gupt, Blocki:2013:DPD:2422436.2422449,
  narayan2012djoin} provide differential privacy for some subsets of
SQL-like queries but none support the majority of queries in
practice.
These mechanisms also require modifications to the
database engine, complicating adoption in practice.

Furthermore, although the theoretical aspects of differential privacy
have been studied extensively, little is known about the quantitative
impact of differential privacy on real-world queries. Recent work has
evaluated differential privacy mechanisms on real-world data~\cite{DBLP:conf/ndss/BlockiDB16, DBLP:conf/sigmod/HayMMCZ16,
  DBLP:journals/pvldb/HuYYDCYGZ15}, however this work uses a
  limited set of queries representing a single,
special-purpose analytics task such as histogram
analysis~\cite{DBLP:conf/ndss/BlockiDB16} or range queries~\cite{DBLP:conf/sigmod/HayMMCZ16}.
To the best of our knowledge,
no existing work has explored the design and evaluation of 
differential privacy techniques for general, real-world queries.

This paper proposes \emph{elastic sensitivity}, a novel approach for differential
privacy of SQL queries.
In contrast to existing work, our approach is compatible with real
database systems, supports queries expressed in standard SQL,
and integrates easily into existing data environments. The work
therefore represents a first step towards practical differential privacy.
The approach has recently been adopted by Uber to enforce differential
privacy for internal data analytics~\cite{uber_blog_post}.

We developed elastic sensitivity using requirements derived from a
dataset of \numqueriesrounded real-world queries.
This paper focuses on counting queries, which constitute the majority of
statistical queries in this dataset, and discusses extensions to
the approach for other aggregation functions. We have released an open-source tool for computing elastic sensitivity of SQL queries~\cite{github_repo}.

\paragraph{Contributions} We make four primary contributions toward
practical differential privacy:

\begin{enumerate}[topsep=1mm,leftmargin=4mm]
\itemsep1.0mm
\item We conduct the largest known empirical study
  of real-world SQL queries---\numqueries queries in total. 
  From these results we show that the queries used in prior work to
  evaluate differential privacy mechanisms are not representative
  of real-world queries. We propose a new set of
  requirements for practical differential privacy on SQL queries
  based on these results.

\item To meet these requirements, we propose
  \emph{\staticsensitivity}, a sound approximation of local sensitivity~\cite{dwork2009differential,nissim2007smooth} that supports general equijoins and can be calculated
  efficiently using only the query itself and a set of precomputed database metrics. We prove
  that elastic sensitivity is an upper bound on local sensitivity and can therefore be
  used to enforce differential privacy using any local sensitivity-based mechanism.

\item We design and implement \flex, an end-to-end differential privacy system for SQL queries
 based on elastic sensitivity. We demonstrate that \flex is compatible with \emph{any}
  existing database, can enforce differential privacy for the majority of real-world SQL queries,
  and incurs negligible (0.03\%) performance overhead.

\item In the first experimental evaluation of its kind, we use \flex 
  to evaluate the impact of differential privacy on 9862 real-world statistical queries in our dataset.
  In contrast to previous empirical evaluations of differential privacy,
  our experimental set contains a diverse variety of real-world queries executed on real data.
  We show that \flex introduces low error for a majority of these queries.
\end{enumerate}

The rest of the paper is organized as
follows. Section~\ref{sec:survey-data-analysis} contains our empirical
study and defines the requirements for a practical differential
privacy mechanism.
In Section~\ref{sec:static-sensitivity}, we define \staticsensitivity
and prove that elastic sensitivity is an upper bound on local sensitivity. In Section~\ref{sec:flex} we
describe \flex, our system for enforcing differential privacy using elastic sensitivity.
Section~\ref{sec:exper-eval} contains our experimental evaluation of \flex and 
Section~\ref{sec:related-work} surveys related work.

\section{Requirements for Practical \\Differential Privacy}
\label{sec:survey-data-analysis}

\def\questionone{How many different database backends are used?}
\def\questiontwo{Which relational operators are used most frequently?}
\def\questionthreea{What types of joins are used most frequently?}
\def\questionthreeb{How many joins are used by a typical query?}
\def\questionfour{What fraction of queries use aggregations?}
\def\questionfive{Which aggregation functions are used most frequently?}
\def\questionsixa{How complex are typical queries?}
\def\questionsixb{How large are typical query results?}

We use a dataset consisting of millions of
SQL queries to establish requirements for a practical differential privacy
system that supports the majority of real-world queries. We
investigate the limitations of existing general-purpose differential mechanisms in light of these requirements, and introduce \emph{\staticsensitivity}, our new approach that meets these requirements.

\ifvldb

\else
We investigate the following properties of queries in our dataset:

\begin{itemize}[topsep=1mm,leftmargin=4mm]
\itemsep0.5mm

\item \textbf{\questionone}
  A practical differential privacy system must integrate with existing
  database infrastructure.

\item \textbf{\questiontwo}
  A practical differential privacy system must at a minimum support the most common
  relational operators.

\item \textbf{What types of joins are used most frequently and many are used by a typical query?}
  Making joins differentially private is challenging because the output
  of a join may contain duplicates of sensitive rows. This duplication is
  difficult to bound as it depends on the join type, join condition, and the
  underlying data. Understanding the different types of joins and their relative
  frequencies is therefore critical for supporting differential privacy on
  these queries.

\item \textbf{What fraction of queries use aggregations and which aggregation functions are used most frequently?}
  Aggregation functions in SQL return statistics about populations in the data.
  Aggregation and non-aggregation queries represent fundamentally different privacy problems, as will be shown.
  A practical system must at minimum support the most common aggregations.

\item \textbf{How complex are typical queries and how large are typical query results?}
  To be practical, a differential privacy mechanism must support real-world queries
  without imposing restrictions on query syntax, and it must scale to typical result sizes.
\end{itemize}

\fi

\ifvldb
We use a dataset of SQL queries written by employees at Uber. The
dataset contains \numqueries queries executed between March 2013 and
August 2016 on a broad range of sensitive data including rider and
driver information, trip logs, and customer support data.
Given the size and diversity of our dataset, we believe it is
representative of SQL queries in other real-world situations.

We investigated multiple trends, including the
number of database backends used, aggregation function types and their frequency of use, and
the prevalence of relational operators such as joins.
Further details of our empirical study are presented in the
extended version of this paper~\cite{es_paper_arxiv}.

\else

\paragraph{Dataset}
We use a dataset of SQL queries written by employees at Uber. The
dataset contains \numqueries queries executed between March 2013 and
August 2016 on a broad range of sensitive data including rider and
driver information, trip logs, and customer support data.

Data analysts at Uber query this information in support of many
business interests such as improving service, detecting fraud, and
understanding trends in the business. The majority of these use-cases
require flexible, general-purpose analytics.

Given the size and diversity of our dataset, we believe it is
representative of SQL queries in other real-world situations.
\fi

\newcounter{question}
\newcommand{\question}[1]{\stepcounter{question}\vspace{3mm}\noindent\textbf{\textbf{Question
    \thequestion:} #1}\nopagebreak}

\newcommand{\results}{\noindent\emph{\textbf{Results.}}\xspace}

\ifvldb

\else

\subsection{Study Results}
\label{sec:study-results}

We first summarize the study results, then define requirements
of a practical differential privacy technique for real-world queries
based on these results.

\question{\questionone}
\rmspc
\begin{figure}[H]
  \centering
  \includegraphics[width=0.45\textwidth]{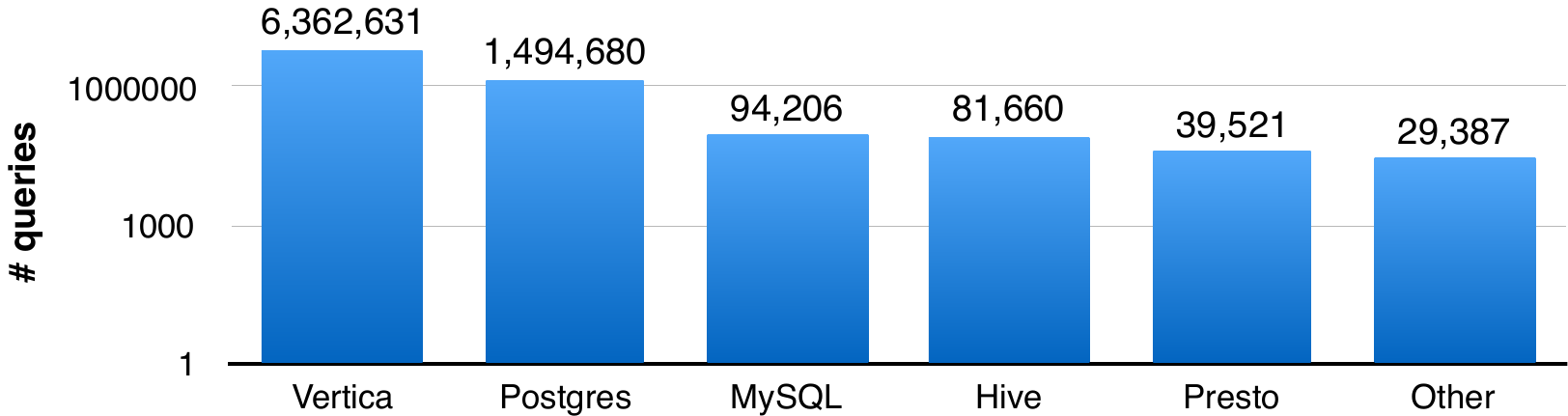}
\end{figure}
\rmspc
\results The queries in our dataset use more than 6 database backends,
including Vertica, Postgres, MySQL, Hive, and Presto.

\question{\questiontwo}
\rmspc
\begin{table}[H]
    \begin{tabular}{c l}
      {\hspace*{-3mm}\small
  \begin{tabular}{l r}
    \textbf{Operator} & \textbf{Frequency} \\
  		\hline
    \texttt{Select}        & 100\% \\
    \texttt{Join}          & 62.1\% \\
    \texttt{Union}         & 0.57\% \\
    \texttt{Minus/Except}  & 0.06\% \\
    \texttt{Intersect}     & 0.03\% \\
  \end{tabular}}

    &
      \begin{minipage}{.25\textwidth}
        \vspace*{1mm}
\results All queries in our dataset use the \select operator, more than half of the queries use the \join operator, and fewer than 1 percent use other operators such as \texttt{Union}, \texttt{Minus}, and \texttt{Intersect}.
      \end{minipage}
  \end{tabular}
\end{table}
\rmspc

\question{\questionthreeb}
\rmspc
\begin{figure}[H]
  \centering
\includegraphics[width=.40\textwidth]{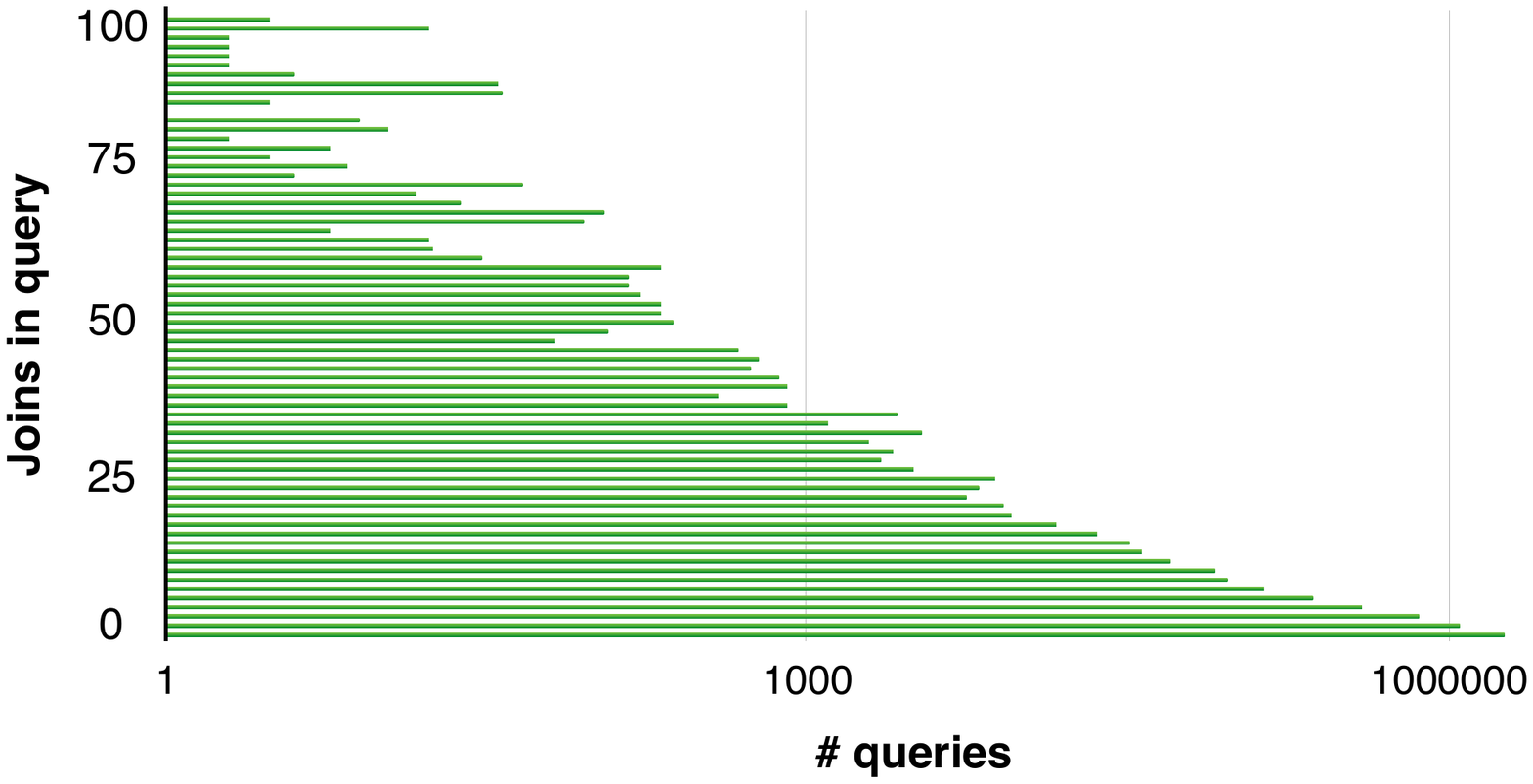}  
\end{figure}

\rmspc
\results A significant number of queries use multiple joins, with queries using as many as 95 joins.

\question{\questionthreea}

\rmspcsm
\begin{figure}[H]
  \centering
\includegraphics[width=.47\textwidth]{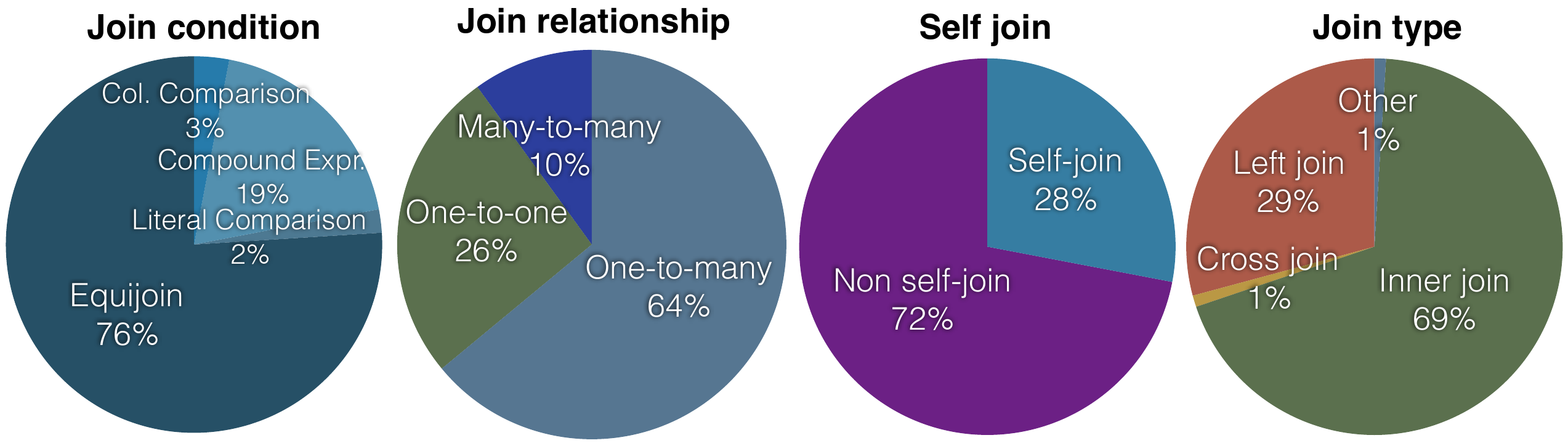}  
\end{figure}
\rmspcsm

\noindent\emph{\textbf{Join condition.}}
The vast majority (76\%) of joins are \emph{equijoins}: joins that are conditioned on value equality of one column from both relations. A separate experiment (not shown) reveals that 65.9\% of all join queries use \emph{exclusively}
equijoins. 

Compound expressions, defined as join conditions using function
applications and conjunctions and disjunctions of primitive operators, account for 19\% of
join conditions. Column comparisons, defined as conditions that compare two
columns using non-equality operators such as \emph{greater
than}, comprise 3\% of join conditions. Literal comparisons, defined as join conditions comparing a single column to a string or integer literal, comprise 2\% of join conditions.

\vspace{1mm}
\noindent\emph{\textbf{Join relationship.}}
A majority of joins (64\%) are conditioned on one-to-many relationships, over one-quarter of joins (26\%) are conditioned on one-to-one relationships, and 10\% of joins are conditioned on many-to-many
relationships.

\vspace{1mm}
\noindent\emph{\textbf{Self join.}}
28\% of queries include at least one self join, defined as a join in which the same database table appears in both joined relations. The remaining queries (72\%) contain no self joins.

\vspace{1mm}
\noindent\emph{\textbf{Join type.}}
Inner join is the most common join type (69\%), followed by left join (29\%) and cross join (1\%). The remaining types (right join and full join) together account for less than 1\%.

\question{\questionfour}

\rmspc
\begin{figure}[H]
  \begin{tabular}{c l}
    \includegraphics[width=0.14\textwidth]{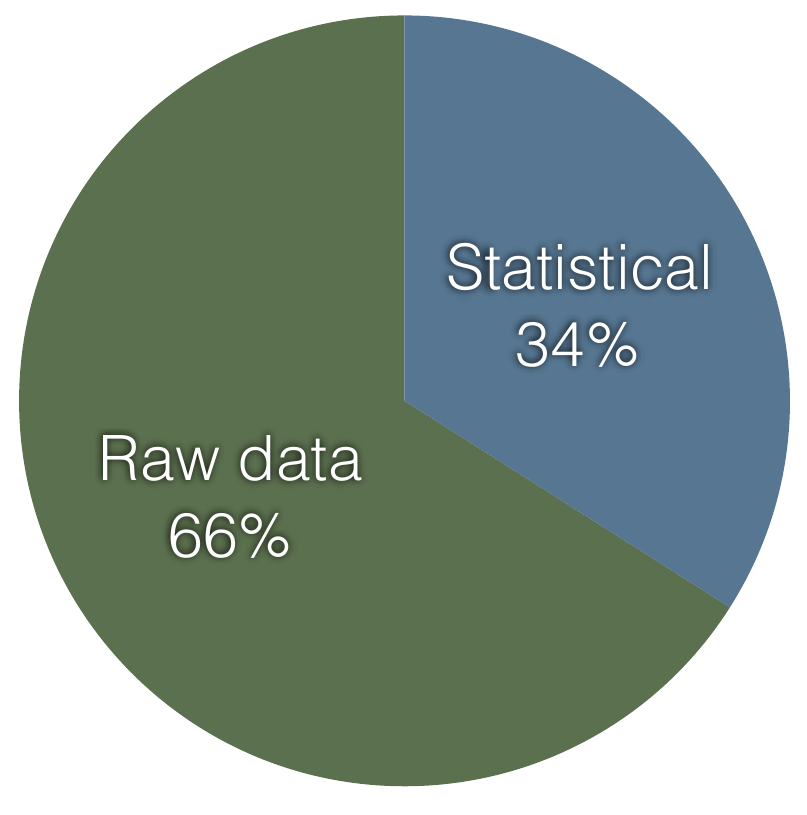}
    &
     \begin{minipage}{.29\textwidth}
        \vspace*{-65pt}
\results Approximately one-third of queries are statistical,
meaning they return only aggregations (count, average, etc.).
The remaining queries return non-aggregated results (i.e., raw data)
in at least one output column.
     \end{minipage}
  \end{tabular}
\end{figure}
\rmspc \rmspc

\question{\questionfive}

\rmspc
\begin{figure}[H]
  \begin{tabular}{c l}
    \includegraphics[width=0.14\textwidth]{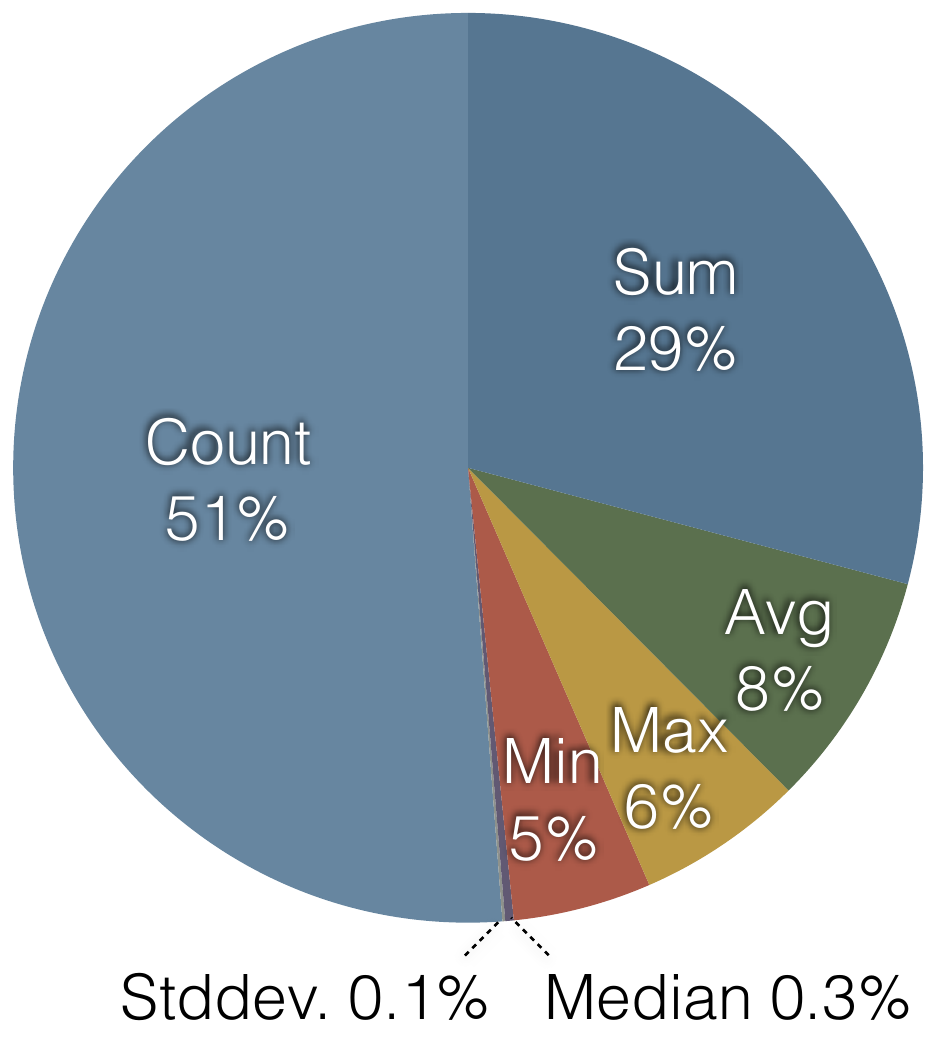}
    &
      \begin{minipage}{.29\textwidth}
        \vspace*{-65pt}
\results \texttt{Count} is the most common
aggregation function (51\%), followed by \texttt{Sum} (29\%),
\texttt{Avg} (8\%), \texttt{Max} (6\%) and \texttt{Min} (5\%).
The remaining functions account for fewer than 1\% of all
aggregation functions.
      \end{minipage}
  \end{tabular}
\end{figure}
\rmspc \rmspc

\question{\questionsixa}

\rmspc
 \begin{figure}[H]
  \centering
\includegraphics[width=.40\textwidth]{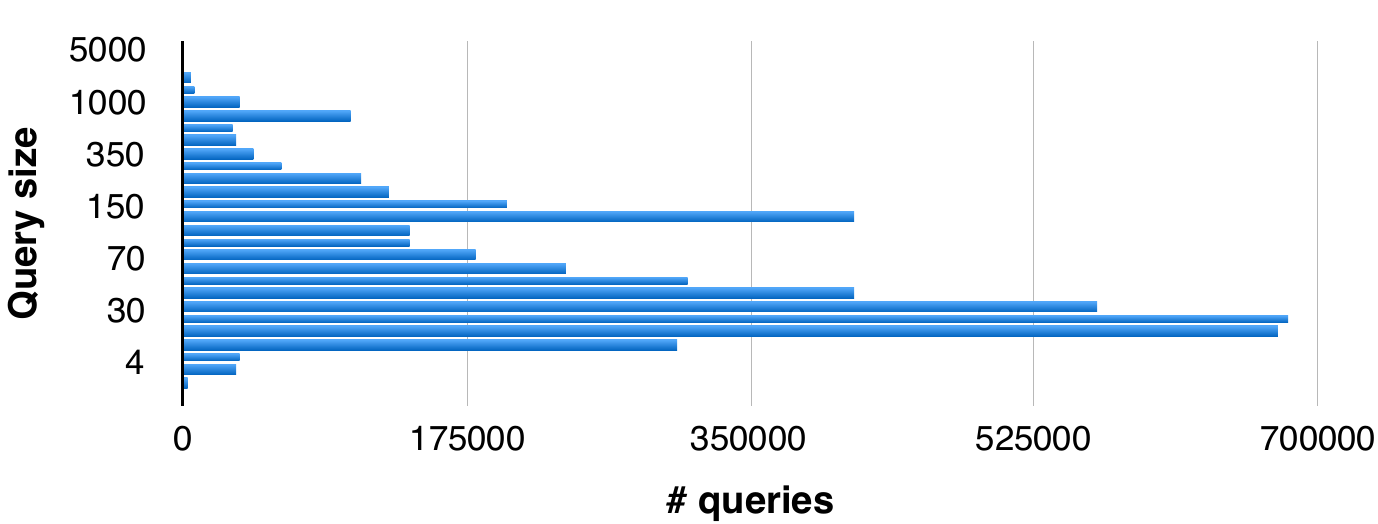}  
\end{figure}
\rmspc
\results The majority of queries are fewer than 100 clauses but a
significant number of queries are much larger, with some queries
containing as many as thousands of clauses. 

\question{\questionsixb}
\rmspc
\begin{figure}[H]
  \centering
\includegraphics[width=.40\textwidth]{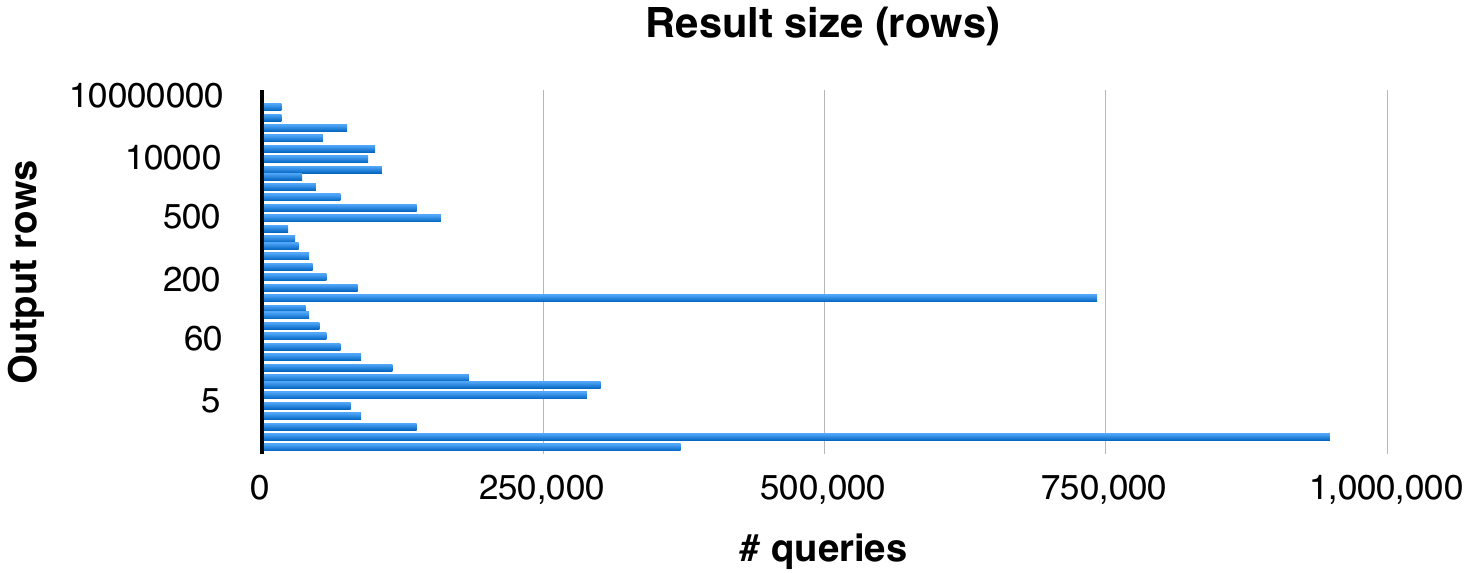}  
\end{figure}
\vspace{-4mm}
\begin{figure}[H]
  \centering
\includegraphics[width=.40\textwidth]{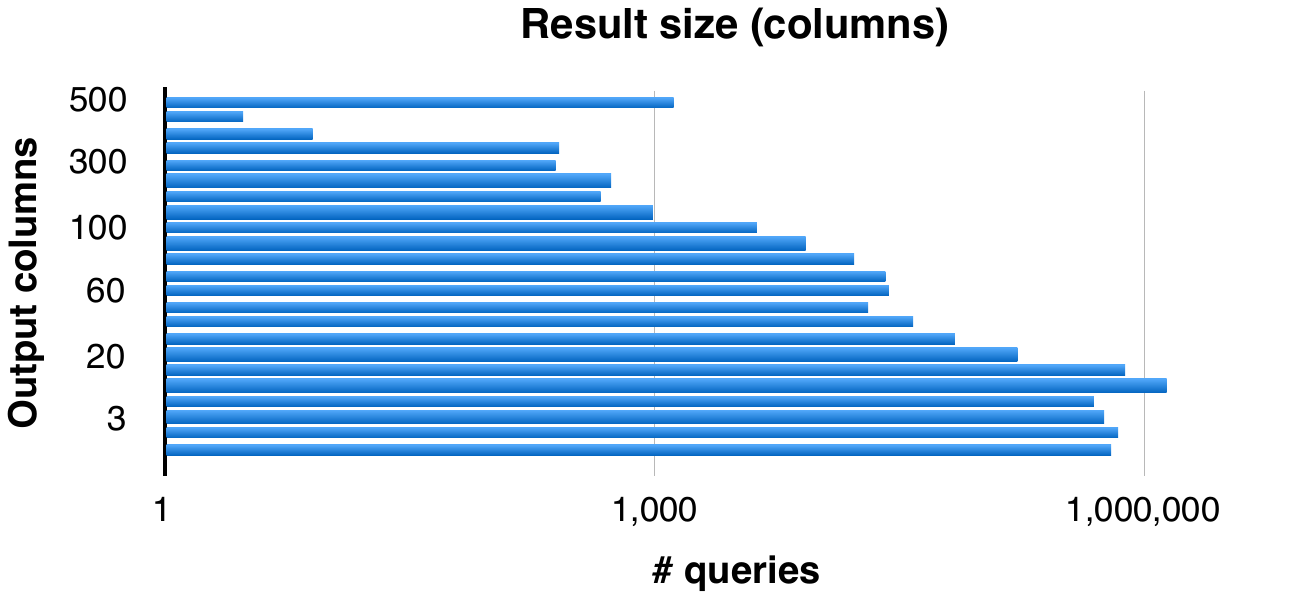}  
\end{figure}
\rmspc
\results 
The output sizes of queries varies dramatically with respect
to both rows and columns, and queries commonly return \emph{hundreds}
of columns and \emph{hundreds of thousands} of rows.

\fi

\def\xmark{}
\def\noteone{\textsuperscript{\ding{81}}}
\def\notetwo{\textsuperscript{\ding{81}}}

\begin{table*}
\small
\centering
\caption{Comparison of general-purpose
  differential privacy mechanisms with support for join.}
\label{tbl:comparison-of-mechanisms}
  \rowcolors{2}{gray!15}{white}
  \def\arraystretch{1.3}
  \begin{tabular}{l c c c c }
    & \vspace*{-4mm}$\overbrace{}^\textbf{\scriptsize Requirement 1}$ & \multicolumn{3}{c}{$\overbrace{\phantom{Equijoin Equijoin Equijoin Equijoin Equi}}^\textbf{\scriptsize Requirement 2}$} \\
    \rowcolor{white} & Database & One-to-one & One-to-many & Many-to-many \\
    \rowcolor{white} & compatibility & equijoin & equijoin & equijoin   \\
    \hline
    PINQ~\cite{mcsherry2009privacy}  & \xmark & \checkmark & \xmark & \xmark  \\
    wPINQ~\cite{proserpio2014calibrating} & \xmark & \checkmark & \checkmark & \checkmark  \\
    Restricted sensitivity~\cite{Blocki:2013:DPD:2422436.2422449} & \xmark & \checkmark & \checkmark & \xmark  \\
    DJoin~\cite{narayan2012djoin} & \xmark & \checkmark & \xmark & \xmark   \\
    \Staticsensitivity (this work) & \checkmark & \checkmark & \checkmark & \checkmark  \\
  \end{tabular}

\end{table*}

\ifvldb
\paragraph{Summary of results}
\else
\subsection{Summary of Results}
\fi
Our study reveals that real-world queries are executed on
many different database engines---in our dataset there
are over 6. We believe this is typical; a variety of
databases are commonly used within a company to match specific
performance and scalability requirements. 
A practical mechanism for differential privacy will therefore allow the use of any of these
existing databases, requiring neither a specific database distribution nor
a custom execution engine in lieu of a standard database.\todo{To the best of our knowledge, no existing differential privacy mechanism satisfies this requirement.}

The study shows that 62.1\% of all queries use SQL \texttt{Join}, and specifically 
equijoins which are by far the most common. Additionally, a majority of queries use multiple joins, more than 
one-quarter use self joins, and joins are conditioned on one-to-one, one-to-many, and many-to-many relationships. These results suggest that a practical differential privacy approach must at a minimum provide robust 
support for equijoins, including the full spectrum of join relationships and an
arbitrary number of nested joins.

One-third (34\%) of all queries return aggregate
statistics. Differential privacy is principally designed for such queries,
and in the remainder of this paper we focus on these queries. Enforcing differential privacy for raw data queries is beyond the scope of this work, as differential privacy is
generally not intended to address this problem.

For statistical queries, \texttt{Count} is by far the most common
aggregation. This validates the focus on counting and histogram
queries by the majority of previous general-purpose differential privacy
mechanisms~\cite{mcsherry2009privacy,proserpio2014calibrating,Blocki:2013:DPD:2422436.2422449,narayan2012djoin}. Our work 
similarly focuses on this class of queries. In Section~\ref{sec:other-aggregation-functions} we discuss possible extensions to support other aggregation functions.

\vspace{1mm}
\paragraph{Requirements}
\label{sec:study-requirements}
We summarize our requirements for practical differential privacy of real-world SQL queries:

\begin{itemize}[topsep=0.5mm,leftmargin=4mm]
\itemsep0.5mm

\item \textbf{Requirement 1: Compatibility with existing databases.}
A practical differential privacy approach must support heterogeneous
database environments by not requiring a specific database distribution or
replacement of the database with a custom runtime.

\item \textbf{Requirement 2: Robust support for equijoin.}
A practical differential privacy approach must provide robust support for equijoin,
including both self joins and non-self joins, all join relationship types,
and queries with an arbitrary number of nested joins.

\end{itemize}

\vspace{1mm}
Our study shows that a differential privacy system satisfying these
requirements is likely to have broad impact, supporting a majority
of real-world statistical queries.

\subsection{Existing Differential Privacy Mechanisms}
\label{sec:exist-diff-priv}
\todo{remind that other work only supports count}

Several existing general-purpose differential privacy mechanisms support queries
with join. Table~\ref{tbl:comparison-of-mechanisms} summarizes these mechanisms and their 
supported features in comparison to our proposed mechanism (last row). 
In Section~\ref{sec:related-work} we discuss additional mechanisms which are not considered here
either because they are not general-purpose or they cannot support joins.

\paragraph{PINQ}
Privacy Integrated Queries (PINQ)~\cite{mcsherry2009privacy} is a
mechanism that provides differential privacy for counting queries
written in an augmented SQL dialect.  PINQ supports a restricted join
operator that groups together results with the same key. For
one-to-one joins, this operator is equivalent to the standard
semantics. For one-to-many and many-to-many joins, on the other hand,
a PINQ query can count the number of unique join
\emph{keys} but not the number of joined \emph{results}. Additionally,
PINQ introduces new operators that do not exist in standard SQL, so the approach
is not compatible with standard databases.

\paragraph{wPINQ}
Weighted PINQ (wPINQ)~\cite{proserpio2014calibrating} extends PINQ
with support for general equijoins and works by assigning a weight to
each row in the database, then scaling down the weights of rows in a
join to ensure an overall sensitivity of 1. In wPINQ, the result of a
counting query is the sum of weights for records being counted plus
noise drawn from the Laplace distribution. This approach allows wPINQ
to support all three types of joins. However, wPINQ does not satisfy our
database compatibility requirement. The system described by Proserpio et al.~\cite{proserpio2014calibrating} uses a custom runtime; applying wPINQ in an existing database would require
modifying the database to propagate weights during execution.

\paragraph{Restricted sensitivity}
Restricted sensitivity~\cite{Blocki:2013:DPD:2422436.2422449} is
designed to bound the global sensitivity of counting queries with
joins, by using properties of an auxiliary data model. The approach
requires bounding the frequency of each join key globally (i.e. for
all possible future databases). This works well for one-to-one and
one-to-many joins, because the unique key on the ``one'' side of the
join has a global bound. However, it cannot handle many-to-many joins,
because the frequency of keys on \emph{both} sides of the join may be
unbounded. Blocki et al.~\cite{Blocki:2013:DPD:2422436.2422449}
formalize the restricted sensitivity approach but do not describe how
it could be used in a system compatible with existing databases, and
no implementation is available.

\paragraph{DJoin}
DJoin~\cite{narayan2012djoin} is a mechanism designed for
differentially private queries over datasets distributed over multiple
parties. Due to the additional restrictions associated with this
setting, DJoin supports only one-to-one joins, because it rewrites
join queries as relational intersections. For example, consider the
following query:

\rmspcsm
\begin{lstlisting}
SELECT COUNT(*) FROM X JOIN Y ON X.A = Y.B
\end{lstlisting}
\rmspcsm

\noindent DJoin rewrites this query to the following (in relational
algebra), which is semantically equivalent to the original query
\emph{only} if the join is one-to-one: $|\pi_A(X) \cap \pi_B(Y)|$.

Additionally, the approach requires the use of special cryptographic
functions during query execution, so it is not compatible with existing
databases.

\vspace{1mm}

To address the limitations of existing mechanisms we propose
\textit{\staticsensitivity}, discussed next. Elastic
sensitivity is compatible with any existing database and supports
general equijoins with the full spectrum of join relationships.
This combination allows use of elastic sensitivity in real-world settings.

\section{\StaticSensitivity}
\label{sec:static-sensitivity}

\Staticsensitivity is a novel approach for calculating an upper
bound on a query's local sensitivity. After motivating the approach,
we provide background on necessary concepts in Section~\ref{sec:background},
formally define \staticsensitivity in Section~\ref{sec:defin-stat}, give an example in Section~\ref{sec:example}, prove its
correctness in Section~\ref{sec:corr-stat}, and discuss an optimization in Section~\ref{sec:optim-joins-with}.

\subsection{Motivation}

Many previous differential privacy mechanisms~\cite{mcsherry2009privacy,Blocki:2013:DPD:2422436.2422449} are
based on global sensitivity. These approaches do not generalize to
queries with joins; the global sensitivity of queries with general joins may
be \emph{unbounded} because ``a join has the ability to multiply input
records, so that a single input record can influence an arbitrarily
large number of output records.''~\cite{mcsherry2009privacy}

Techniques based on local sensitivity~\cite{nissim2007smooth,dwork2009differential} generally provide greater utility than global sensitivity-based approaches because they consider the \emph{actual} database.
Indeed, local sensitivity is finite for general queries with joins.
However, directly computing local sensitivity is computationally
infeasible, as it requires running the query on every possible
neighbor of the true database---in our environment this would require
running more than 1 billion queries for each original query. Previous work~\cite{nissim2007smooth} describes
efficient methods to calculate local sensitivity for a limited set of fixed queries
(e.g., the median of all values in the database) but these techniques do not apply to
general-purpose queries or queries with join.

These challenges are reflected in the design of previous mechanisms
listed in Table~\ref{tbl:comparison-of-mechanisms}.
PINQ and restricted sensitivity support only joins for
which global sensitivity can be bounded, and wPINQ scales weights
attached to the data during joins to ensure a global sensitivity of 1.
DJoin uses a measure of sensitivity unique to its distributed setting.
None of these techniques are based on local sensitivity.

Elastic sensitivity is the first tractable approach to leverage local sensitivity for
queries with general equijoins. 
The key insight of our approach is to model the impact of each join in the
query using precomputed metrics about the frequency of join keys in the true database.
This novel approach allows elastic sensitivity to compute a \emph{conservative} approximation
of local sensitivity without requiring any additional interactions with the database.
In Section~\ref{sec:corr-stat}, we prove elastic sensitivity is an upper bound on local sensitivity and can
therefore be used with any local sensitivity-based differential
privacy mechanism. In Section~\ref{sec:apply-smooth-sens}, we describe
how to use elastic sensitivity to enforce differential privacy.

\subsection{Background}
\label{sec:background}

We briefly summarize existing differential privacy concepts necessary for  
describing our approach. For a more thorough overview of differential
privacy, we refer the reader to Dwork and Roth's excellent
reference~\cite{dwork2014algorithmic}.

Differential privacy provides a formal guarantee of
\emph{indistinguishability}: a differentially private result does not
yield very much information about which of two neighboring databases
was used in calculating the result. 

  Formally, differential privacy considers a database modeled as a vector $x \in D^n$, in which $x_i$ represents the data contributed by user $i$. The \emph{distance} between two databases $x, y \in D^n$ is $d(x,y) = |\{i | x_i \not = y_i\}|$. Two databases $x, y$ are \emph{neighbors} if $d(x,y) = 1$. 

\begin{definition}[Differential privacy]
A randomized mechanism $\mathcal{K} : D^n \rightarrow
\mathbb{R}^d$ preserves $(\epsilon, \delta)$-differential privacy if
for any pair of databases $x, y \in D^n$ such that $d(x,y) = 1$, and
for all sets $S$ of possible outputs:

\[\mbox{Pr}[\mathcal{K}(x)\in S] \leq e^\epsilon
  \mbox{Pr}[\mathcal{K}(y) \in S] + \delta \]
\end{definition}

Intuitively,
the \emph{sensitivity} of a query corresponds to the amount its results can
change when the database changes.
One measure of sensitivity is \emph{global sensitivity}, which is the maximum difference in
the query's result on \emph{any} two neighboring databases.

\begin{definition}[Global Sensitivity] For $f:D^n \rightarrow
  \mathbb{R}^d$ and all $x,y \in D^n$, the global sensitivity of $f$
  is
  
\[GS_f = \max_{x,y:d(x,y)=1}||f(x)-f(y)||\]
\end{definition}

McSherry~\cite{mcsherry2009privacy} defines the notion of \emph{stable transformations} on a database, which we will use later. Intuitively, a transformation is stable if its privacy implications can be bounded.

\begin{definition}[Global Stability] A transformation $T: D^n \rightarrow D^n$ is \emph{$c$-stable} if for $x, y \in D^n$ such that $d(x, y) = 1$, $d(T(x), T(y)) \leq c$.
\end{definition}

Another definition of sensitivity is \emph{local
  sensitivity}~\cite{dwork2009differential,nissim2007smooth}, which 
is the maximum difference between the query's results on the
\emph{true database} and any neighbor of it:

\begin{definition}[Local Sensitivity] For $f:D^n \rightarrow
  \mathbb{R}^d$ and $x \in D^n$, the local sensitivity of $f$ at $x$
  is
  
\[LS_f(x) = \max_{y:d(x,y)=1}||f(x)-f(y)||\]
\end{definition}

Local sensitivity is often much
lower than global sensitivity since it is a property of the single true
database rather than the set of all possible databases.

We extend the notion of stability to the case of local sensitivity by fixing $x$ to be the true database.

\begin{definition}[Local Stability] A transformation $T: D^n \rightarrow D^n$ is \emph{locally $c$-stable} for true database $x$ if for $y \in D^n$ such that $d(x, y) = 1$, $d(T(x), T(y)) \leq c$.
  \label{def:local_stability}
\end{definition}

\paragraph{Differential privacy for multi-table databases}
In this paper we consider \emph{bounded} differential privacy~\cite{kifer2011no}, in which $x$ can be obtained from its neighbor $y$ by changing (but not adding or removing) a single tuple. Our setting involves a database represented as a multiset of tuples, and we wish to protect the presence or absence of a single tuple. If tuples are drawn from the domain $D$ and the database contains $n$ tuples, the setting can be represented as a vector $x \in D^n$, in which $x_i = v$ if row $i$ in the database contains the tuple $v$.

For queries without joins, a database $x \in D^n$ is considered as a single table. However, our setting considers database with multiple tables and queries with joins. We map this setting into the traditional definition of differential privacy by considering $m$ tables $t_1, ..., t_m$ as disjoint subsets of a single database $x \in D^n$, so that $\bigcup^m_{i = 1} t_i = x$.

With this mapping, differential privacy offers the same protection as in the single-table case: it protects the presence or absence of any single tuple in the database. When a single user contributes more than one protected tuple, however, protecting individual tuples may not be sufficient to provide privacy. Note that this caveat applies \emph{equally} to the single- and multi-table cases---it is not a unique problem of multi-table differential privacy.

We maintain the same definition of neighboring databases as the single-table case. Neighbors of $x \in D^n$ can be obtained by selecting a table $t_i \in x$ and changing a single tuple, equivalent to changing a single tuple in a single-table database.

\paragraph{Smoothing functions}
Because local sensitivity is based on the true database, it must
be used carefully to avoid leaking information about the data.
Prior work~\cite{nissim2007smooth,dwork2009differential}
describes techniques for using local sensitivity to enforce differential privacy.
Henceforth we use the term \emph{smoothing functions} to refer to
these techniques. Smoothing functions are independent of the method used
to compute local sensitivity, but generally require that local
sensitivity can be computed an arbitrary distance $k$ from the true
database (i.e. when at most $k$ entries are changed).

\begin{definition}[Local Sensitivity at Distance]
  The local sensitivity of $f$ at distance $k$ from database $x$ is:
  
  \[A_f^{(k)}(x) =\max_{y \in D^n:d(x,y) \leq k}LS_f(y)\]  
  \label{def:a}
\end{definition}

\rmspc
\subsection{Definition of \StaticSensitivity}
\label{sec:defin-stat}

We define the \emph{\staticsensitivity} of a query recursively on the
query's structure. To allow the use of smoothing functions,
our definition describes how to calculate elastic sensitivity at
arbitrary distance $k$ from the true database (under this definition,
the local sensitivity of the query is defined at $k=0$).

Figure~\ref{fig:staticsensitivity-defn} contains the
complete definition, which is in four parts:
(a) \emph{Core relational algebra},
(b) \emph{Definition of \Staticsensitivity},
(c) \emph{Max frequency at distance $k$}, and
(d) \emph{Ancestors of a relation}. We describe each part next.

\paragraph{Core relational algebra}
We present the formal definition of elastic sensitivity in terms of a
subset of the standard relational algebra, defined in
Figure~\ref{fig:staticsensitivity-defn}(a). This subset includes
selection ($\sigma$), projection ($\pi$), join ($\bowtie$), counting
($\mbox{\emph{Count}}$), and counting with grouping
($\mbox{\emph{Count}}_{G_1..G_n}$). It admits arbitrary equijoins,
including self joins, and all join relationships (one-to-one, one-to-many, and
many-to-many).

\ifsubqueries
\else
Our notation admits subqueries with aggregation. For example, the query
``how many cities have had more than 10 trips'' can be written:
\rmspc

\[ \mbox{\emph{Count}}(\sigma_{\mbox{count} > 10}(\mbox{\emph{Count}}_{\mbox{city\_id}}(\mbox{trips}))) \]
\fi

To simplify the presentation our notation assumes the query
performs a count as the \emph{outermost} operation, however
the approach naturally extends to aggregations nested anywhere in the query
as long as the query does not perform arithmetic or other modifications
to aggregation result. For example, the following
query counts the total number of trips and projects the ``count''
attribute:

\[ \pi_{\mbox{count}}\mbox{\emph{Count}}(\mbox{trips}) \]

\noindent Our approach can support this query by treating the inner relation
as the query root.

\paragraph{\Staticsensitivity}
Figure~\ref{fig:staticsensitivity-defn}(b) contains the recursive
definition of \staticsensitivity at distance $k$. We denote the \staticsensitivity of
query $q$ at distance $k$ from the true database $x$ as
$\sfun^{(k)}(q, x)$. The $\sfun$ function is defined in terms of the
elastic \emph{stability} of relational transformations (denoted
$\sfun_R$).

$\sfun^{(k)}_R(r, x)$ bounds the local stability (Definition~\ref{def:local_stability}) of relation $r$ at
distance $k$ from the true database $x$. $\sfun^{(k)}_R(r, x)$ is defined in terms of $\mf_k(a, r,
x)$, the maximum frequency of attribute $a$ in relation $r$ at
distance $k$ from database $x$.

\paragraph{Max frequency at distance $k$}
The \emph{maximum frequency} metric is used to bound the sensitivity
of joins. We define the maximum frequency $\mf(a, r,x)$ as the
frequency of the most frequent value of attribute $a$ in relation $r$
in the database instance $x$. In Section~\ref{sec:flex} we describe how the
values of $\mf$ can be obtained from the database.

To bound the local sensitivity of a query at distance $k$ from the
true database, we must also bound the max frequency of each join key
at distance $k$ from the true database. For attribute $a$ of relation
$r$ in the true database $x$, we denote this value $\mf_k(a, r, x)$,
and define it (in terms of $\mf$) in Figure~\ref{fig:staticsensitivity-defn}(c).

\paragraph{Ancestors of a relation}
The definition in Figure~\ref{fig:staticsensitivity-defn}(d) is a
formalization to identify self joins.
Self joins have a much greater effect on sensitivity than joins of
non-overlapping relations. In a self join, adding or removing one row of the
underlying database may cause changes in \emph{both} joined relations,
rather than just one or the other. The join case of \staticsensitivity
is therefore defined in two cases: one for self joins, and one for
joins of non-overlapping relations. To distinguish the two cases, we
use $\mathcal{A}(r)$ (defined in
Figure~\ref{fig:staticsensitivity-defn}(d)), which denotes the set of
tables possibly contributing rows to $r$.  A join of two relations
$r_1$ and $r_2$ is a self join when $r_1$ and $r_2$ \emph{overlap},
which occurs when some table $t$ in the underlying database
contributes rows to both $r_1$ and $r_2$. Rows $r_1$ and $r_2$ are
non-overlapping when $|\mathcal{A}(r_1) \cap \mathcal{A}(r_2)| = 0$.

\begin{figure*}
  \scriptsize
  \centering
  \hspace*{-.7cm}
  \begin{tabular}{l l l}
    \begin{minipage}[t]{.2\linewidth}
      ~~\textbf{Core relational algebra:}
      
      $$
      \begin{array}{lcl}
      \multicolumn{3}{l}{\cellcolor{gray!15}\mbox{Attribute names}}\\
        a & &\\
        \\
        \multicolumn{3}{l}{\cellcolor{gray!15}\mbox{Value constants}}\\
        v & &\\

        \\
        \multicolumn{3}{l}{\cellcolor{gray!15}\mbox{Relational transformations}}\\ 
        R & ::= & t \;|\; R_1 \underset{x = y}{\bowtie} R_2 \\
          & | & \Pi_{a_1, ..., a_n} R \;|\; \sigma_\varphi R \vspace*{1mm}\\
          & | & \mbox{\emph{Count}}(R) \\
        \ifsubqueries
        \else
          & | & \mbox{\emph{Count}}_{G_1..G_n}(R) \\
        \fi
        \\
        \multicolumn{3}{l}{\cellcolor{gray!15}\mbox{Selection predicates}}\\
        \varphi & ::= & a_1 \theta a_2 \;|\; a \theta v\\
        \theta & ::= & < \;|\; \leq \;|\; = \\
        & | & \neq \;|\; \geq \;|\; >\\
        \\
        \multicolumn{3}{l}{\cellcolor{gray!15}\mbox{Counting queries}}\\ 
        Q & ::= & \mbox{\emph{Count}}(R) \\
         & | & \underset{G_1..G_n}{\mbox{\emph{Count}}}(R) \\
      \end{array}
      $$
      \centering
      \vfill
    \end{minipage}
    &

  \begin{minipage}[t]{.40\linewidth}
    ~~\textbf{Definition of elastic stability:}
    
  $$
  \begin{array}{l c l}

    \cellcolor{gray!15}\sfun^{(k)}_R &\cellcolor{gray!15}::& \cellcolor{gray!15}R \rightarrow D^n\!\! \rightarrow \mbox{\emph{elastic stability}} \\
    \sfun^{(k)}_R(t, x) &=& 1 \\
    \sfun^{(k)}_R(r_1 \underset{a = b}{\bowtie} r_2, x) &=& \\
    \multicolumn{3}{l}{
    \hfill\begin{cases}
      \max(\mf_k(a, r_1, x)\sfun^{(k)}_R(r_2, x), \\ \phantom{\max(}\mf_k(b, r_2, x)\sfun^{(k)}_R(r_1, x)) & | \mathcal{A}(r_1) \cap \mathcal{A}(r_2) | = 0  \\
      \\
            \mf_k(a, r_1, x)\sfun^{(k)}_R(r_2, x) + \\ \;\;\mf_k(b, r_2, x)\sfun^{(k)}_R(r_1, x) + \\ \;\;\sfun^{(k)}_R(r_1, x)\sfun^{(k)}_R(r_2, x) & | \mathcal{A}(r_1) \cap \mathcal{A}(r_2) | > 0 \\

    \end{cases}}\vspace*{1mm}\\

    \\
    \sfun^{(k)}_R(\Pi_{a_1, ..., a_n} r, x) &=& \sfun^{(k)}_R(r, x)\\
    \sfun^{(k)}_R(\sigma_\varphi r, x) &=& \sfun^{(k)}_R(r, x)\\
    \ifsubqueries
    \sfun^{(k)}_R(\mbox{\emph{Count}}(r)) &=& 1 \\[2mm]
    \else
    \sfun^{(k)}_R(\mbox{\emph{Count}}(r)) &=& 1 \\
    \sfun^{(k)}_R(\mbox{\emph{Count}}_{G_1..G_n}(r)) &=&  2 \sfun^{(k)}_R(r, x) \\[2mm]
    \fi

\multicolumn{2}{l}{\hspace{-2mm}\textbf{Definition of \staticsensitivity:}}\vspace{0.5mm}
    \\
    \cellcolor{gray!15}\sfun^{(k)} &\cellcolor{gray!15}::& \cellcolor{gray!15}Q \rightarrow D^n\!\! \rightarrow \mbox{\emph{\staticsensitivity}} \\
    \sfun^{(k)}(\mbox{\emph{Count}}(r), x) &=& \sfun^{(k)}_R(r, x)\\
    \sfun^{(k)}(\underset{G_1..G_n}{\mbox{\emph{Count}}}(r), x) &=& 2 \sfun^{(k)}_R(r, x)\\

  \end{array}
  $$
\end{minipage}

&
    \begin{minipage}[t]{.31\linewidth}
      \textbf{Maximum frequency at distance $k$:}
      
    $$
  \begin{array}{l c l}
    \cellcolor{gray!15}\mf_k &\cellcolor{gray!15}::& \cellcolor{gray!15}a \rightarrow R \rightarrow D^n\!\! \rightarrow \mathbb{N}\\
    \mf_k(a, t, x) &=& \mf(a, t, x) + k\\

    \todo{this is a revision}\\
    \mf_k(a_1, r_1 \underset{a_2 = a_3}{\bowtie} r_2,x) &=&  \\

    \multicolumn{3}{l}{
    \hfill\begin{cases}
      \mf_k(a_1, r_1, x) \mf_k(a_3, r_2, x) & a_1 \in r_1\\
      \mf_k(a_1, r_2, x) \mf_k(a_2, r_1, x) & a_1 \in r_2\\
    \end{cases}}\vspace*{1mm}\\
    \\

    \mf_k(a, \Pi_{a_1, ..., a_n} r, x) &=& \mf_k(a, r, x)\\
    \mf_k(a, \sigma_\varphi r, x) &=& \mf_k(a, r, x)\\
    \mf_k(a, \mbox{\emph{Count}}(r), x) &=& \bot\\

  \end{array}
  $$

  \begin{center}
    (c)
  \end{center}
  
  \textbf{Ancestors of a relation:}
  
    $$
  \begin{array}{l c l}
    \cellcolor{gray!15}\mathcal{A} &\cellcolor{gray!15}::& \cellcolor{gray!15}R \rightarrow \{R\}\\
    \mathcal{A}(t) &=& \{t\}\\
    \mathcal{A}(r_1 \underset{a = b}{\bowtie} r_2) &=& \mathcal{A}(r_1) \cup \mathcal{A}(r_2)\\
    \mathcal{A}(\Pi_{a_1, ..., a_n} r) &=& \mathcal{A}(r)\\
    \mathcal{A}(\sigma_\varphi r) &=& \mathcal{A}(r)\\
  \end{array}
  $$

\end{minipage}
                   \vspace*{-1mm}
\end{tabular}
\begin{tabular}{l l l}
  \small
\hspace{-2cm}(a)\hspace*{5cm}&(b)\hspace*{6.5cm}&(d)
\end{tabular}
\caption{(a) syntax of core relational algebra; (b) definition of
  elastic stability and \staticsensitivity at distance $k$; (c) definition of maximum
  frequency at distance $k$; (d) definition of ancestors of a
  relation.}\label{fig:staticsensitivity-defn}
\label{fig:staticsensitivity-defn}
\end{figure*}

\paragraph{Join conditions}
For simplicity our notation refers only to the case where a join
contains a single equality predicate. However, the approach naturally extends to join conditions containing \emph{any} predicate that can be decomposed into a conjunction of an equijoin term and any other terms.
Consider for example the following query:

\begin{lstlisting}[language=sql]
SELECT COUNT(*) FROM a
JOIN b ON a.id = b.id AND a.size > b.size
\end{lstlisting}
\vspace{-1mm}

Calculation of elastic sensitivity for this query requires only the equijoin
term ($a.id = b.id$) and therefore follows directly from our definition.
Note that in a conjunction, each predicate adds additional constraints
that may decrease (but never increase) the true local stability of the join,
hence our definition correctly computes an upper bound on the stability.

\paragraph{Unsupported queries} We discuss several cases of queries
that are not supported by our definition in
Section~\ref{sec:unsupported-queries}.

\subsection{Example: Counting Triangles}
\label{sec:example}

We now consider step-by-step calculation of elastic sensitivity for an example query. We select the problem of counting triangles in a directed graph, described by Prosperio et al. in their evaluation of WPINQ~\cite{proserpio2014calibrating}. This example contains multiple self-joins, which demonstrate the most complex recursive cases of Figure~\ref{fig:staticsensitivity-defn}.

Following Prosperio et al. we select privacy budget $\epsilon = 0.7$
and consider the~\texttt{ca-HepTh}~\cite{hepth} dataset, which has maximum frequency metric of 65.

In SQL, the query is expressed as:
  
\begin{lstlisting}
SELECT COUNT(*) FROM edges e1
JOIN edges e2 ON e1.dest = e2.source AND 
                 e1.source < e2.source
JOIN edges e3 ON e2.dest = e3.source AND 
                 e3.dest = e1.source AND 
                 e2.source < e3.source
\end{lstlisting}

Consider the first join ($e_1 \bowtie e_2$), which joins the \texttt{edges} table with itself. By definition of $\sfun_R^{(k)}$ (self join case) the elastic stability of this relation is:

{\scriptsize
\begin{align*}
    \mf_k(\mbox{\emph{dest}}, \mbox{\texttt{edges}}, x)\sfun^{(k)}_R(\mbox{\texttt{edges}}, x) + \\ \;\;\mf_k(\mbox{\emph{source}}, \mbox{\texttt{edges}}, x)\sfun^{(k)}_R(\mbox{\texttt{edges}}, x) + \\ \;\;\sfun^{(k)}_R(\mbox{\texttt{edges}}, x)\sfun^{(k)}_R(\mbox{\texttt{edges}}, x)
\end{align*}
}
Furthermore, since \texttt{edges} is a table, {\scriptsize $\sfun_R^{(k)}(\mbox{\texttt{edges}}) = 1$.}

\hspace{-2.5mm}We then have:

{\scriptsize
\begin{align*}
  &\mf_k(\mbox{\emph{dest}}, \mbox{\texttt{edges}}, x) = \mf(\mbox{\emph{dest}}, \mbox{\texttt{edges}}, x) + k\\ &\mf_k(\mbox{\emph{source}}, \mbox{\texttt{edges}}, x) = \mf(\mbox{\emph{source}}, \mbox{\texttt{edges}}, x) + k
\end{align*}
}
Substituting the max frequency metric (65), the elastic stability of this relation is
  $(65+k) + (65+k) + 1 = 131 + 2k$.

Now consider the second join, which joins $e_3$ (an alias for the \texttt{edges} table) with the previous joined relation ($e_1 \bowtie e_2$). Following the same process and substituting values, the elastic stability of this relation is:

{\scriptsize
\begin{align*}
   &\mf_k(\mbox{\emph{dest}}, \mbox{\texttt{edges}}, x)\sfun^{(k)}_R(e_1 \bowtie e_2, x) + \\
  &\;\;\;\;\mf_k(\mbox{\emph{source}}, \mbox{\texttt{edges}}, x)\sfun^{(k)}_R(\mbox{\texttt{edges}}, x) + \\
  &\;\;\;\;\sfun^{(k)}_R(e_1 \bowtie e_2, x)\sfun^{(k)}_R(\mbox{\texttt{edges}}, x)\\
  = &(65+k)(131 + 2k) + 
      (65+k) + (131 + 2k)\\
  = &2k^2 + 199k + 8711
\end{align*}
}

\vspace{-4mm}
This expression describes the elastic stability at distance $k$ of relation $(e_1 \bowtie e_2)\bowtie e_3$. Per the definition of $\sfun^{(k)}$ the elastic sensitivity of a counting query is equal to the elastic stability of the relation being counted, therefore this expression defines the elastic sensitivity of the full original query.

As we will discuss in Section~\ref{sec:proof-correctness}, elastic sensitivity must be
smoothed using smooth sensitivity~\cite{nissim2007smooth} before it can be used with the Laplace mechanism.
In short, this process requires computing the maximum value of 
elastic sensitivity at $k$ multiplied by an exponentially decaying function in $k$:

{\scriptsize
\begin{align*}
   &S = \max_{k=0,1,...,n} e^{-\beta k} \sfun^{(k)} \\
  &\;\;\;= \max_{k=0,1,...,n} e^{-\beta k} (2k^2 + 199k + 8711)
\end{align*}
}
where {\scriptsize $\beta = \frac{\epsilon}{2 \ln(2/\delta)}$} and {\scriptsize $\delta = 10^{-8}$}.

The maximum value is $S = 8896.95$, which occurs at distance $k = 19$.
Therefore, to enforce differential privacy we add Laplace noise scaled to
$\frac{2S}{\epsilon} = \frac{17793.9}{0.7}$,
per Definition~\ref{def:flex} (see Section~\ref{sec:proof-correctness}).

\subsection{Elastic Sensitivity is an Upper Bound on Local Sensitivity}
\label{sec:corr-stat}

In this section, we prove that \staticsensitivity is an upper bound on
the local sensitivity of a query. This fundamental result affirms the
soundness of using elastic sensitivity in any local sensitivity-based
differential privacy mechanism. First, we prove two important
lemmas: one showing the correctness of the max frequency at distance
$k$, and the other showing the correctness of elastic stability.

\begin{lemma}
  For database $x$, at distance $k$, $r$ has at most $\mf_k(a, r, x)$
  occurrences of the most popular join key in attribute $a$: 
 \vspace{-1mm}
{\small \[\mf_k(a, r, x) \geq \max_{y:d(x,y) \leq k} \mf(a, r, y) \] }
\label{lma:mf}
\end{lemma}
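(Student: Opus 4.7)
The plan is to prove the lemma by structural induction on the relation $r$, following the recursive definition of $\mf_k$ in Figure~\ref{fig:staticsensitivity-defn}(c). For every form $r$ can take we must show that, for any $y$ with $d(x,y)\le k$, the frequency of the most popular value in attribute $a$ of $r(y)$ is bounded above by $\mf_k(a,r,x)$.

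For the base case $r = t$, note that any $y$ with $d(x,y)\le k$ differs from $x$ in at most $k$ tuples, so changing from $x$ to $y$ can increase the count of any specific value in column $a$ of $t$ by at most $k$. This gives $\mf(a,t,y) \le \mf(a,t,x)+k = \mf_k(a,t,x)$. The projection case is immediate since $\Pi$ does not delete rows, and the selection case is immediate since $\sigma_\varphi$ only removes rows, so the frequency of any value can only decrease; both follow directly from the induction hypothesis on the subrelation. The $\mbox{\emph{Count}}$ case is vacuous because $\mf_k$ is defined as $\bot$ (the attribute $a$ no longer exists post-aggregation), matching the fact that the lemma is only meant to be applied to attributes present in $r$.

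The main work is in the join case $r = r_1 \bowtie_{a_2=a_3} r_2$ with $a_1 \in r_1$ (the symmetric case $a_1 \in r_2$ is analogous). Fix any neighbor $y$ with $d(x,y)\le k$ and any value $v$ occurring in attribute $a_1$ of $r(y)$. Every row of $r(y)$ with $a_1 = v$ arises from pairing a row of $r_1(y)$ whose $a_1$ value is $v$ with a row of $r_2(y)$ whose $a_3$ value equals the $a_2$ value of that $r_1$ row. By the induction hypothesis on $r_1$, there are at most $\mf_k(a_1,r_1,x)$ such rows in $r_1(y)$. Each such row has a single value $u$ in $a_2$, and by the induction hypothesis on $r_2$, at most $\mf_k(a_3,r_2,x)$ rows of $r_2(y)$ carry that value $u$ in $a_3$. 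Multiplying, the number of $r(y)$-rows with $a_1 = v$ is at most $\mf_k(a_1,r_1,x)\cdot\mf_k(a_3,r_2,x)$, which is exactly the definition of $\mf_k(a_1,r,x)$.

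I expect the join case to be the only subtle step: one has to be careful that the bound $\mf_k(a_1,r_1,x)\cdot\mf_k(a_3,r_2,x)$ really is attained by pairing occurrences and not double-counted, and that it holds uniformly over all neighbors $y$ (not just $x$ itself), which is why the inductive hypothesis is stated with the $\max_y$ already absorbed. A minor subtlety to flag is that the bound is entirely independent of whether $r_1$ and $r_2$ overlap: self-joins affect elastic \emph{stability} but not max-frequency, since a change to the underlying database is already reflected once in each $\mf_k(\cdot,r_i,x)$ via the inductive hypothesis. With these ingredients the induction closes, giving the stated inequality for all supported relations $r$.
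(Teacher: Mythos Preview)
Your proposal is correct and follows essentially the same structural-induction argument as the paper: identical base case for tables, the same observation for projection and selection, and the same product bound in the join case obtained by applying the inductive hypothesis to each subrelation. Your join case is slightly more explicit (fixing an arbitrary value $v$ and counting matching pairs) than the paper's inequality-chaining presentation, and you additionally handle the \emph{Count} case and remark on self-joins, but these are minor elaborations on the same proof.
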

\vspace{-6mm}
\begin{proof} By induction on the structure of $r$.
  
  \case{Case $t$}
  To obtain the largest possible number of occurrences of the most popular join key in a table $t$ at distance $k$, we modify $k$ rows to contain the most popular join key. Thus,
  $\max_{y:d(x,y) \leq k} \mf(a, r, y) = \mf(a, r, x) + k$.

  \case{Case $r_1 \underset{a_2 = a_3}{\bowtie} r_2$}
 We need to show that:

\smallskip
\smallskip
\smallskip
{\scriptsize
  \begin{equation}
    \mf_k(a_1, r_1 \underset{a_2 = a_3}{\bowtie} r_2, x) \geq \max_{y:d(x,y) \leq k} \mf(a_1, r_1 \underset{a_2 = a_3}{\bowtie} r_2, y)
    \label{mf_proof_to_prove}
  \end{equation}
}

\noindent Consider the case when $a_1 \in r_1$ (the proof for case $a_1 \in r_2$ is symmetric). The worst-case sensitivity occurs when each tuple in $r_1$ with the most popular value for $a_1$ also contains attribute value $a_2$ matching the most popular value of attribute $a_3$ in $r_2$. So we can rewrite equation~\ref{mf_proof_to_prove}:

\smallskip
{\scriptsize
  \begin{equation}
    \mf_k(a_1, r_1 \underset{a_2 = a_3}{\bowtie} r_2, x) \geq \max_{y:d(x,y) \leq k} \mf(a_1, r_1, y) \mf(a_3, r_2, y)
    \label{mf_proof_rewritten}
  \end{equation}
}

  \noindent We then rewrite the left-hand side, based on the definition of $\mf_k$ and the inductive hypothesis. Each step may make the left-hand side smaller, but never larger, preserving the original inequality:

{\scriptsize
  \begin{align*}
    & \mf_k(a_1, r_1 \underset{a_2 = a_3}{\bowtie} r_2, x) \\
    & = \mf_k(a_1, r_1, x) \mf_k(a_3, r_2, x) \\
    & \geq \max_{y:d(x,y) \leq k} \mf(a_1, r_1, y) \max_{y:d(x,y) \leq k} \mf(a_3, r_2, y) \\
    & \geq \max_{y:d(x,y) \leq k} \mf(a_1, r_1, y) \mf(a_3, r_2, y)
  \end{align*}
}

\vspace{-2mm}
  \noindent Which is equal to the right-hand side of equation~\ref{mf_proof_rewritten}.

  \case{Case $\Pi_{a_1, ..., a_n} r$}
  Projection does not change the number of rows, so the conclusion follows directly from the inductive hypothesis.

  \case{Case $\sigma_\varphi r$}
  Selection might filter out some rows, but does not modify attribute values. In the worst case, no rows are filtered out, so $\sigma_\varphi r$ has the same number of occurrences of the most popular join key as $r$. The conclusion thus follows directly from the inductive hypothesis.
\end{proof}

\begin{lemma}
$\sfun_R^{(k)}(r)$ is an upper bound on the local
  stability of relation expression $r$ at distance $k$ from database $x$:

  {\small \[ A_{\mbox{count}(r)}^{(k)}(x) \leq \sfun_R^{(k)}(r,x) \]  }
  \label{sr_lemma}
\end{lemma}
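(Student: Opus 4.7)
The plan is to prove the lemma by structural induction on the relation expression $r$, mirroring the recursive definition of $\sfun_R^{(k)}$ in Figure~\ref{fig:staticsensitivity-defn}(b). The goal is to show that for any $y$ with $d(x,y) \leq k$ and any $z$ adjacent to $y$, we have $d(r(y), r(z)) \leq \sfun_R^{(k)}(r,x)$, from which the bound on $A_{\mbox{count}(r)}^{(k)}(x)$ follows immediately since the count differs by at most the symmetric difference of the relations.

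The easy cases handle most of the clauses: for a table $t$, a single-tuple change to the database changes at most one row of $t$, matching the value $1$; projection and selection preserve or reduce the number of differing rows, so the inductive hypothesis transfers unchanged; and $\mbox{\emph{Count}}(r)$ returns a single-row relation, so its stability is trivially $1$. None of these cases require Lemma~\ref{lma:mf}.

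The main obstacle is the join case, which splits into the non-self-join and self-join subcases. For both, the key idea is to write $r_1(z) = r_1(y) \cup \delta r_1$ and $r_2(z) = r_2(y) \cup \delta r_2$ (treating differences as signed multisets) and to decompose
\[ (r_1 \bowtie r_2)(z) \;\triangle\; (r_1 \bowtie r_2)(y) \;\subseteq\; (\delta r_1 \bowtie r_2(y)) \cup (r_1(y) \bowtie \delta r_2) \cup (\delta r_1 \bowtie \delta r_2). \]
Each piece on the right is bounded by pairing stabilities with maximum frequencies: $|\delta r_i| \leq \sfun_R^{(k)}(r_i, x)$ by the inductive hypothesis, and the number of matches any tuple has in $r_j(y)$ along the join attribute is bounded by the appropriate $\mf$ evaluated at $y$, which in turn is bounded by $\mf_k$ at $x$ via Lemma~\ref{lma:mf}. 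In the non-self-join case, $\mathcal{A}(r_1)$ and $\mathcal{A}(r_2)$ are disjoint, so a single tuple change affects at most one of $\delta r_1$ and $\delta r_2$; the cross term $\delta r_1 \bowtie \delta r_2$ vanishes, and one of the other two terms is zero, yielding the $\max$ of the two remaining products. In the self-join case, a single tuple change can simultaneously perturb both $r_1$ and $r_2$, so all three terms must be retained, producing exactly the sum appearing in the self-join clause of $\sfun_R^{(k)}$.

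The subtle point to watch during write-up will be handling $z$'s distance from $x$: $y$ lies within distance $k$, so Lemma~\ref{lma:mf} directly bounds frequencies in $r_i(y)$ by $\mf_k$, while the matching bound for $r_i(z)$ follows by the same lemma together with the single-step change. The other bookkeeping detail is verifying that $|\delta r_1 \bowtie \delta r_2| \leq |\delta r_1|\cdot|\delta r_2|$ so that the cross term telescopes cleanly into $\sfun_R^{(k)}(r_1, x)\sfun_R^{(k)}(r_2, x)$. Once these two points are handled, each case's algebraic bound matches the corresponding clause of the definition exactly, completing the induction.
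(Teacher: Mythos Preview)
Your proposal is correct and follows essentially the same structural-induction argument as the paper. Your three-term decomposition of the join case into $(\delta r_1 \bowtie r_2(y)) \cup (r_1(y) \bowtie \delta r_2) \cup (\delta r_1 \bowtie \delta r_2)$ is exactly the paper's partition into ``class 1, class 2, class 3'' changed rows, and your handling of the non-self-join subcase (one $\delta r_i$ empty, hence the $\max$) is in fact a cleaner statement of what the paper expresses somewhat loosely as ``either $\sfun_R^{(k)}(r_1,x)=0$ or $\sfun_R^{(k)}(r_2,x)=0$.''
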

 \rmspc \rmspc
\begin{proof}
  By induction on the structure of $r$.

  \case{Case $t$} The stability of a table is 1, no matter its contents.

  \case{Case $r_1 \underset{a = b}{\bowtie} r_2$} We want to bound the number of changed rows in the joined relation. There are two
  cases, depending on whether or not the join is a self join.

  \case{Subcase 1: no self join} When the ancestors of $r_1$ and $r_2$
  are non-overlapping (i.e. $| \mathcal{A}(r_1) \cap \mathcal{A}(r_2)
  | = 0$), then the join is not a self join. This means that either
  $r_1$ may change or $r_2$ may change, \emph{but not both}. As a
  result, either $\sfun^{(k)}_R(r_1, x) = 0$ or $\sfun^{(k)}_R(r_2, x)
  = 0$. We therefore have two cases:
  
  \vspace{1mm}
  \begin{enumerate}[topsep=0.2mm,leftmargin=4mm]
  \item When $\sfun^{(k)}_R(r_1, x) = 0$, $r_2$ may contain at most
    $\sfun^{(k)}_R(r_2, x)$ changed rows, producing at most $\mf_k(a,
    r_1, x)\sfun^{(k)}_R(r_2, x)$ changed rows in the joined relation.

  \item In the symmetric case, when $\sfun^{(k)}_R(r_2, x) = 0$, the
    joined relation contains at most $\mf_k(b, r_2,
    x)\sfun^{(k)}_R(r_1, x)$ changed rows.
  \end{enumerate}

  \noindent We choose to modify the relation resulting in the largest
  number of changed rows, which is exactly the definition.

  \case{Subcase 2: self join} When the set of ancestor tables of $r_1$
  overlaps with the set of ancestor tables of $r_2$, i.e. $|
  \mathcal{A}(r_1) \cap \mathcal{A}(r_2) | > 0$, then changing a
  single row in the database could result in changed rows in
  \emph{both} $r_1$ and $r_2$.

  \vspace{2mm}
  \noindent In the self join case, there are three sources of changed rows:
  \begin{enumerate}[topsep=0.2mm,leftmargin=4mm]
  \item The join key of an original row from $r_1$ could match the join key of a changed row in $r_2$.
  \item The join key of an original row from $r_2$ could match the join key of a changed row in $r_1$.
  \item The join key of a changed row from $r_1$ could match the join key of a changed row in $r_2$.
  \end{enumerate}

  \vspace{2mm}
  \noindent Now consider how many changed rows could exist in each class.
    \begin{enumerate}[topsep=0.2mm,leftmargin=4mm]
  \item In class 1, $r_2$ could have at most
    $\sfun^{(k)}_R(r_2, x)$ changed rows (by the inductive
    hypothesis). In the worst case, each of these changed rows matches
    the \emph{most popular} join key in $r_1$, which occurs at most
    $\mf_k(a, r_1, x)$ times (by Lemma~\ref{lma:mf}), so class 1
     contains at most $\mf_k(a, r_1, x) \sfun^{(k)}_R(r_2,
    x)$ changed rows.
  \item Class 2 is the symmetric case of class 1, and thus contains at
    most $\mf_k(b, r_2, x) \sfun^{(k)}_R(r_1, x)$ changed rows.
  \item In class 3, we know that $r_1$ contains at most
    $\sfun^{(k)}_R(r_1,x)$ changed rows and $r_2$ contains at most
    $\sfun^{(k)}_R(r_2,x)$ changed rows. In the worst case, all of
    these changed rows contain the same join key, and so the joined
    relation contains $\sfun^{(k)}_R(r_1,x)\sfun^{(k)}_R(r_2,x)$
    changed rows.
  \end{enumerate}
  
  \noindent The total number of changed rows is therefore bounded by the sum of the bounds on the three classes:

{\scriptsize
  \begin{align*}
    \mf_k(a, r_1, x)\sfun^{(k)}_R(r_2, x) +
    &\mf_k(b, r_2, x)\sfun^{(k)}_R(r_1, x) +\\
    \sfun^{(k)}_R(r_1, x)&\sfun^{(k)}_R(r_2, x)
  \end{align*}
}
\vspace{-1mm}
  \noindent Which is exactly the definition.

  \case{Case $\Pi_{a_1, ..., a_n} r$}
  Projection does not change rows. The conclusion therefore follows from the inductive hypothesis.

  \case{Case $\sigma_\varphi r$}
  Selection does not change rows. The conclusion therefore follows from the inductive hypothesis.

  \case{Case $\mbox{\emph{Count}}(r)$} Count without grouping produces
  a relation with a single row. The stability of such a relation is 1, at any distance.

  \ifsubqueries
  \else
\case{Case $\mbox{\emph{Count}}_{G_1..G_n}(r)$} The relational
  $\mbox{\emph{Count}}$ with grouping produces a histogram, where each changed row in the underlying relation can change two rows in the histogram~\cite{dwork2006calibrating}. The histogram's local stability is thus bounded by $2\sfun^{(k)}_R(r, x)$, using the inductive hypothesis.\qedhere
  \fi

  \end{proof}

\vspace{-2mm}
\paragraph{Main theorem} We are now prepared to prove the main theorem.

\begin{theorem}
  The elastic sensitivity $\sfun^{(k)}(q,x)$ of a query $q$ at distance $k$ from the true database $x$ is an upper bound on the local sensitivity $A_{q}^{(k)}(x)$ of $q$ executed at distance $k$ from database $x$:
\vspace{-2mm}
 {\small \[ A_{q}^{(k)}(x) \leq \sfun^{(k)}(q,x) \] }
  \label{thm:local_sensitivity}
\end{theorem}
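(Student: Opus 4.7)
The plan is to reduce Theorem~\ref{thm:local_sensitivity} to Lemma~\ref{sr_lemma} by case analysis on the outermost form of the counting query $q$, which by the grammar is either $\mbox{\emph{Count}}(r)$ or $\mbox{\emph{Count}}_{G_1..G_n}(r)$. Lemma~\ref{sr_lemma} already supplies the heavy lifting: it bounds the number of rows in $r$ that can change between $x$ and any neighbor at distance $k$. All that remains is to connect that row-level bound to the sensitivity of the numeric output of the aggregation.

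For the ungrouped case $q = \mbox{\emph{Count}}(r)$, the definition of $\sfun^{(k)}$ in Figure~\ref{fig:staticsensitivity-defn}(b) gives $\sfun^{(k)}(q,x) = \sfun_R^{(k)}(r,x)$. Since a tuple changed in the database can change the count by at most one per changed row in $r$, the local sensitivity at distance $k$ is bounded by the number of changed rows in $r$, which is in turn bounded by $\sfun_R^{(k)}(r,x)$ via Lemma~\ref{sr_lemma}. So the conclusion follows immediately.

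For the grouped case $q = \mbox{\emph{Count}}_{G_1..G_n}(r)$, the definition gives $\sfun^{(k)}(q,x) = 2\,\sfun_R^{(k)}(r,x)$. Here the output is a histogram rather than a scalar, and I would measure sensitivity in the $L_1$ norm as is standard for counting histograms~\cite{dwork2006calibrating}. The key observation is that a single changed row in $r$ can only remove one count from the group it previously belonged to and add one count to the group it now belongs to, affecting the $L_1$ distance between output histograms by at most $2$. Composing this with Lemma~\ref{sr_lemma}, the total $L_1$ change is at most $2\,\sfun_R^{(k)}(r,x)$, which matches the definition.

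I do not expect a serious obstacle here: the recursive work is absorbed entirely into Lemmas~\ref{lma:mf} and~\ref{sr_lemma}, and the theorem is essentially a two-line wrap-up. The only subtle point worth stating carefully is the factor-of-$2$ argument for the grouped case, and in particular the fact that our formulation of neighbors (changing a single tuple, not inserting or deleting it) justifies the ``removes one count, adds one count'' reasoning; if one instead considered unbounded differential privacy, the bound would still hold but would require a slightly different justification. Because this point is potentially confusing, I would state it explicitly alongside the citation to establish that the factor of $2$ is tight in the worst case and hence the definition is sound rather than loose.
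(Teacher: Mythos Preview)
Your proposal is correct and follows essentially the same approach as the paper: a two-case split on the outermost form of $q$, with the ungrouped case reducing directly to Lemma~\ref{sr_lemma} and the grouped case invoking the standard factor-of-$2$ histogram argument (also cited to~\cite{dwork2006calibrating}) composed with Lemma~\ref{sr_lemma}. Your write-up is in fact slightly more explicit than the paper's in justifying the factor of $2$ via the bounded-DP ``remove one, add one'' reasoning.
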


\rmspc \rmspc
\rmspc

\begin{proof}
  There are two cases: histogram queries and non-histogram queries.
  
  \case{Case $\mbox{\emph{Count}}(r)$ (non-histogram)} The local
  sensitivity of a non-histogram counting query over $r$ is equal to
  the stability of $r$, so the result follows directly from
  Lemma~\ref{sr_lemma}.

  \case{Case $\underset{G_1..G_n}{\mbox{\emph{Count}}}(r)$ (histogram)}
  In a histogram query, each changed row in the underlying relation can change two rows in the histogram~\cite{dwork2006calibrating}. Thus by Lemma~\ref{sr_lemma}, the histogram's local stability is bounded by $2\sfun^{(k)}_R(r, x)$.\qedhere
\end{proof}

\subsection{Optimization for Public Tables}
\label{sec:optim-joins-with}

\hyphenation{re-cords}
Our definition of elastic sensitivity assumes that all database records must be protected. In practice, databases often contain a mixture of sensitive and non-sensitive data. This fact can be used to tighten our bound on local sensitivity for queries joining on non-sensitive tables.

In our dataset,
for example, city data is publicly known, therefore the system 
does not need to protect against an attacker learning information
about the cities table. Note the set of public tables is domain-specific
and will vary in each data environment.

More precisely, in a join expression
\lstinline|T1 JOIN T2 ON T1.A = T2.B|, if \lstinline|T2| is publicly
known, the elastic stability of the join is equal to the elastic stability of
\lstinline|T1| times the maximum frequency of \lstinline|T2.B|. This formulation prevents the use of a
publicly-known table with repeated join keys from revealing information
about a private table.

%

\subsection{Discussion of Limitations and Extensions}
This section discusses limitations of elastic sensitivity and potential
extensions to support other common aggregation functions.

\subsubsection{Unsupported Queries}
\label{sec:unsupported-queries}
\Staticsensitivity does not support non-equijoins, and adding support
for these is not straightforward. Consider the query:

\begin{lstlisting}
SELECT count(*) FROM A JOIN B ON A.x > B.y
\end{lstlisting}

\noindent This query compares join keys using the greater-than
operator, and bounding the number of matches for this comparison would
require knowledge about \emph{all} the data for \lstinline|A.x| and
\lstinline|B.y|.

Fortunately, as demonstrated in our empirical study, more than three-quarters of
joins are equijoins. \Staticsensitivity could be extended to support 
other join types by querying the database for necessary data-dependent
bounds, but this modification would require interactions
with the database for each original query.

\Staticsensitivity can also fail when requisite max-frequency
metrics are not available due to the query structure. Consider the query:

\begin{lstlisting}
WITH A AS (SELECT count(*) FROM T1),
     B AS (SELECT count(*) FROM T2)
SELECT count(*) FROM A JOIN B ON A.count = B.count
\end{lstlisting}

\noindent This query uses counts computed in subqueries as join
keys. Because the $\mf$ metric covers only the attributes available in
the original tables of the database, our approach cannot bound the sensitivity
of this query and must reject it.
In general, \staticsensitivity applies only when join
keys are drawn directly from original tables. Fortunately, this criterion
holds for 98.5\% of joins in our dataset, so this limitation has
very little consequence in practice.

%
%
%
%
%
%
%
%
%
%
%
%
%

\subsubsection{Supporting Other Aggregation Functions}
\label{sec:other-aggregation-functions}
In this section we outline possible extensions of our approach to
support non-count aggregation functions, and characterize the expected utility for each.
These extensions, which provide a roadmap for potential future research, would
expand the set of queries supported by an elastic sensitivity-based system.

\paragraph{Value range metric}
To describe these extensions we define a new metric, \emph{value range} $\vi(a, r)$,
defined as the maximum value minus the minimum value allowed by the data model
of column $a$ in relation $r$.

This metric can be derived in a few ways. First, it can be extracted
automatically from the database's column constraint definitions~\cite{check_constraint},
if they exist. Second, a SQL query can extract the \emph{current} value range,
which can provide a guideline for selecting the permissible value range based on records
already in the database; finally, a domain expert can define the metric using knowledge about the
data's semantics.

Once the metric is defined, it must be enforced in order for differential
privacy to be guaranteed. The metric could be enforced as a data integrity check,
for example using column check constraints~\cite{check_constraint}.

\paragraph{Sum and Average}
For sum and average, we note that the local sensitivity of these functions
is affected both by the stability of the underlying relation,
because each row of the relation potentially contributes to the
computed sum or average, and by the range of possible values of the
attributes involved.

Given our definition of $\vi$ above, the elastic sensitivity of both
\texttt{Sum} and \texttt{Average} on relation $r$ at distance $k$ from database $x$ is defined by
$\vi(a, r) S^{(k)}_R(r, x)$. This expression captures the largest possible change in local sensitivity,
in which each new row in $r$ has the maximum value of $a$, for a total change of $\vi(a,r)$ per row.

For \emph{Sum} queries on relations with stability 1 (i.e. relations without joins),
this definition of \staticsensitivity is exactly equal to the query's local sensitivity,
so the approach will provide optimal utility.
As the relation's stability grows, so does the gap between \staticsensitivity
and local sensitivity, and utility degrades, since \staticsensitivity
makes the worst-case assumption that each row duplicated by a join contains the
maximum value allowed by the data model.

For the \emph{average} function, this definition is exactly equal to local sensitivity
only for the degenerate case of averages of a single row. As more input rows are added,
local sensitivity shrinks, since the impact of a single new row is amortized
over the number of averaged records, while elastic sensitivity remains constant.
Therefore utility degradation is proportional to both the stability of the relation as well
as the number of records being averaged.

This could be mitigated with a separate analysis
to compute a \emph{lower bound} on the number of records being averaged, in which case the
sensitivity could be scaled down by this factor. Such an analysis would require inspection of filter
conditions in the query and an expanded set of database metrics.

\paragraph{Max and min}
We observe that the stability of the underlying
relation has no effect on the local sensitivity of \emph{max} and
\emph{min}. Consequently, for such queries the data model $\vi(a,r)$ directly provides
the \emph{global sensitivity}, which is an
upper bound of local sensitivity. However, the max and min functions are inherently sensitive, because they are
strongly affected by outliers in the
database~\cite{dwork2009differential}, and therefore \emph{any} differential
privacy technique will provide poor utility in the general case.

Due to this fundamental limitation, previous work
~\cite{dwork2009differential,nissim2007smooth,smith2011privacy} has
focused on the \emph{robust} counterparts of these functions, such as
the interquartile range, which are less sensitive to changes in the
database.
This strategy is not viable in our setting since functions like
interquartile range are not supported by standard SQL.

\section{FLEX: Practical Differential \\Privacy for SQL Queries}
\label{sec:apply-smooth-sens}
\label{sec:flex}

\begin{figure}
  \centering
  \includegraphics[width=.48\textwidth]{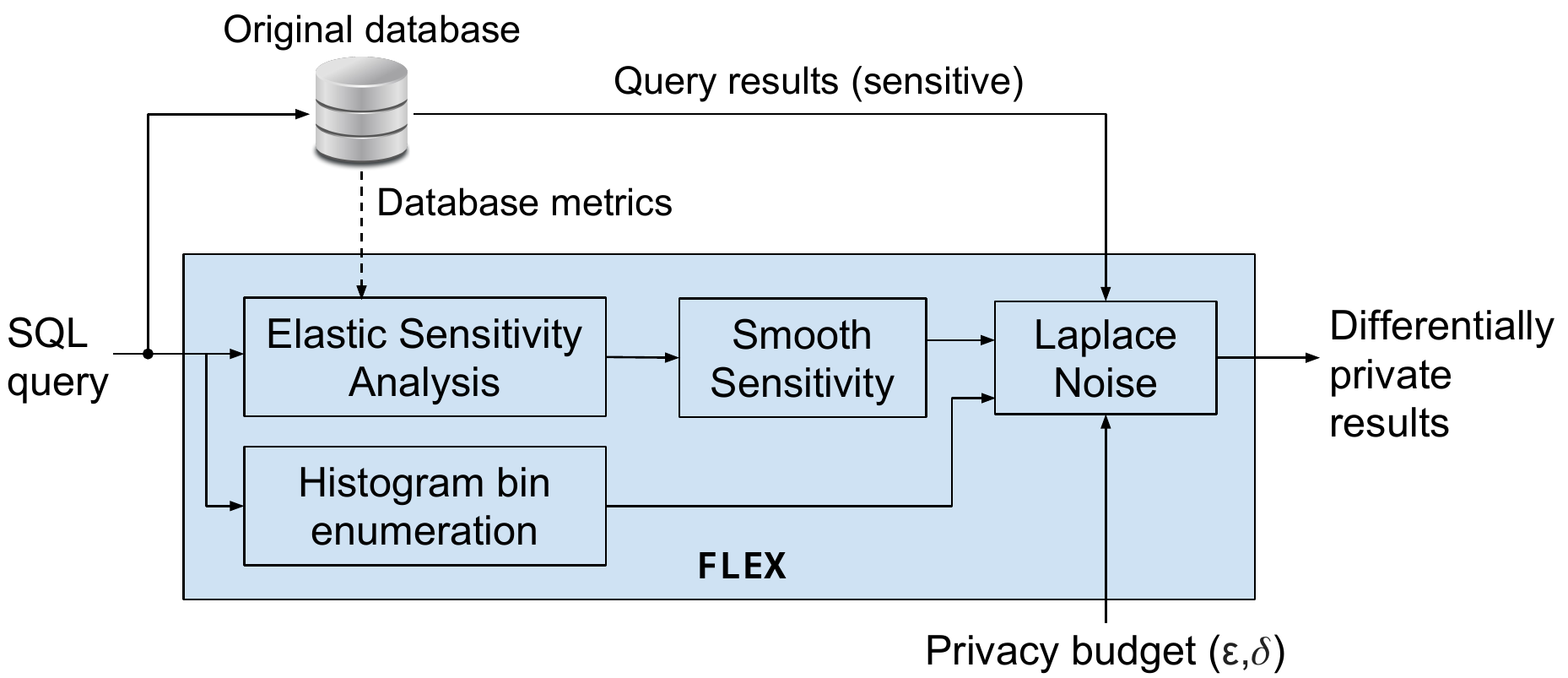}
  \caption{Architecture of \flex.}
  \label{fig:flex_arch}
\end{figure}

This section describes \flex, our system to enforce differential privacy
for SQL queries using \staticsensitivity. Figure~\ref{fig:flex_arch} summarizes
the architecture of our system. For a given SQL query, \flex uses an analysis
of the query to calculate its elastic sensitivity, as described in Section~\ref{sec:static-sensitivity}. \flex then applies smooth sensitivity~\cite{nissim2007smooth,nissim2011smooth} to the elastic sensitivity and finally adds noise drawn from the Laplace distribution to the original query results. In Section~\ref{sec:proof-correctness} we prove this approach provides $(\epsilon,\delta)$-differential privacy.

Importantly, our approach allows the query to execute on any existing database.
\flex requires only static analysis of the query and post-processing of the query results,
and requires no interactions with the database to enforce differential privacy.
As we demonstrate in Section~\ref{sec:exper-eval}, this design allows the approach to scale
to big data while incurring minimal performance overhead.

\paragraph{Collecting max frequency metrics}
The definition of elastic sensitivity requires a set of precomputed metrics $\mf$ from
the database, defined as the frequency of the most frequent attribute for each join key.
The values of $\mf$ can be easily obtained with a SQL query. For example, this query
retrieves the metric for column $a$ of table $T$:

\begin{lstlisting}[language=sql]
SELECT COUNT(a) FROM T GROUP BY a
ORDER BY count DESC LIMIT 1;
\end{lstlisting}

Obtaining these metrics is a separate step from enforcing differential privacy for a query; the metrics can be obtained once and re-used for all queries. Note the metric must be recomputed when the most frequent join attribute changes, otherwise differential privacy is no longer guaranteed. For this reason, the architecture in Figure~\ref{fig:flex_arch} is ideal for environments where database updates are far less frequent than database queries.

Most databases can be configured using triggers~\cite{triggers} to automatically recompute the metrics on database updates; this approach could support environments with frequent data updates.

\paragraph{Elastic Sensitivity analysis}
To compute elastic sensitivity we built an analysis framework for SQL queries
based on the Presto parser~\cite{presto}, with additional logic to resolve aliases
and a framework to perform abstract interpretation-based dataflow analyses on the query tree. 
\flex's elastic sensitivity analysis is built on this dataflow analysis engine, and propagates information about
ancestor relations and max-frequency metrics for each joined column in order to compute the
overall elastic sensitivity of the query, per the recursive definition in Section~\ref{sec:static-sensitivity}.
We evaluate the runtime and success rate of this analysis in Section~\ref{sec:exper-eval}.

\paragraph{Histogram bin enumeration}
When a query uses SQL's \lstinline|GROUP BY| construct, the output
is a histogram containing a set of bin labels and an aggregation result 
(e.g., count) for each bin.
To simplify presentation, our definition of elastic sensitivity in Section~\ref{sec:defin-stat}
assumes that the analyst provides the desired histogram bins labels $\ell$.
This requirement, also adopted
by previous work~\cite{mcsherry2009privacy}, is necessary to prevent leaking information via the presence or absence of a bin. 
In practice, however, analysts do not expect to provide histogram bin labels
manually.

In some cases, \flex can automatically build the set of histogram bin
labels $\ell$ for a given query.  In our dataset,
many histogram queries use non-protected bin labels drawn from 
finite domains (e.g. city names or product types). For each possible
value of the histogram bin label, \flex can automatically build $\ell$ and obtain the corresponding
differentially private count for that histogram bin. Then, \flex adds a
row to the output containing the bin label and its differentially
private count, where results for missing bins are assigned
value 0 and noise added as usual.

This process returns a histogram of the
expected form which does not reveal anything new through the
presence or absence of a bin. Additionally, since this process requires
the bin labels to be non-protected, the original bin labels can be
returned. The process can generalize to any aggregation function.

This process requires a finite, enumerable,
and non-protected set of values for each histogram bin label. %
When the requirement cannot be met, for example because
the histogram bin labels are protected or cannot be enumerated,
\flex can still return the differentially private count for each bin,
but it must rely on the analyst to specify the bin labels.

\subsection{Proof of Correctness}
\label{sec:proof-correctness}

In this section we formally define the \flex mechanism 
and prove that it provides $(\epsilon,\delta)$-differential privacy.

\flex implements the following differential privacy mechanism derived
from the Laplace-based smooth sensitivity mechanism defined by Nissim
et al.~\cite{nissim2007smooth,nissim2011smooth}:

\begin{definition}[\flex mechanism]
  For input query $q$ and histogram bin labels $\ell$ on true
  database $x$ of size $n$, with privacy parameters $(\epsilon, \delta)$:
  
  \begin{enumerate}
  \item Set $\beta = \frac{\epsilon}{2 \ln(2/\delta)}$.
  \item Calculate $S = \max_{k=0,1,...,n} e^{-\beta k} \sfun^{(k)}(q,x)$.
  \item Release $q_\ell(x) + \mbox{Lap}(2S/\epsilon)$.
  \end{enumerate}
  \label{def:flex}
\end{definition}

\noindent This mechanism leverages smooth
sensitivity~\cite{nissim2007smooth,nissim2011smooth}, using elastic sensitivity as an
upper bound on local sensitivity.

\begin{theorem} The \flex mechanism provides $(\epsilon,\delta)$-differential privacy.
  \label{thm:smooth_sensitivity}
\end{theorem}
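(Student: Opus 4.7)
The plan is to reduce the theorem to the Laplace-based smooth sensitivity framework of Nissim, Raskhodnikova, and Smith~\cite{nissim2007smooth}, which guarantees that a mechanism releasing $f(x) + \mbox{Lap}(2S(x)/\epsilon)$ preserves $(\epsilon,\delta)$-differential privacy whenever $S$ is a $\beta$-smooth upper bound on $LS_f$ with $\beta \leq \epsilon/(2\ln(2/\delta))$. Since Definition~\ref{def:flex} picks exactly this value of $\beta$ and uses noise scale $2S/\epsilon$, it suffices to show that $S(x) := \max_{k=0,1,\dots,n} e^{-\beta k}\,\sfun^{(k)}(q,x)$ is itself a $\beta$-smooth upper bound on $LS_q$.

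Being a $\beta$-smooth upper bound requires two things: (i) $S(x) \geq LS_q(x)$ for every $x$; and (ii) $S(x) \leq e^{\beta} S(y)$ for every pair of neighbors $x, y$. For (i), I would simply take the $k=0$ term in the maximum: by Theorem~\ref{thm:local_sensitivity}, $\sfun^{(0)}(q,x) \geq A_q^{(0)}(x) = LS_q(x)$, and the exponential factor is $1$. For (ii), the crucial ingredient is a ``shifted monotonicity'' lemma asserting that for any neighboring $x,y$ and any $k$,
\[ \sfun^{(k)}(q, x) \leq \sfun^{(k+1)}(q, y). \]
Given this, for every $k$ we have $e^{-\beta k}\sfun^{(k)}(q,x) \leq e^{\beta}\cdot e^{-\beta(k+1)}\sfun^{(k+1)}(q,y) \leq e^{\beta} S(y)$, and taking the max over $k$ on the left gives $S(x) \leq e^{\beta} S(y)$, exactly condition (ii).

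The main obstacle is proving this shifted monotonicity lemma. The proof parallels the structure of Lemma~\ref{lma:mf} and Lemma~\ref{sr_lemma}, proceeding by structural induction on the relation expression. The base case for a table $t$ is immediate from $\mf_k(a,t,x) = \mf(a,t,x) + k \leq \mf(a,t,y) + k + 1 = \mf_{k+1}(a,t,y)$, since $\mf$ itself shifts by at most $1$ under a neighboring change. Projection and selection cases follow trivially from the inductive hypothesis. The join cases (both overlapping and non-overlapping) push the inequality through sums and products of $\mf_k$ and $\sfun^{(k)}_R$ terms, using that each factor is independently bounded by its shifted counterpart. Once this monotonicity lemma is in hand, condition (ii) follows immediately, and the theorem then reduces to a direct citation of Nissim et al.'s smooth sensitivity theorem with $\beta = \epsilon/(2\ln(2/\delta))$.
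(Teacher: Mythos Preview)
Your proposal is correct and follows essentially the same route as the paper: reduce to the Nissim--Raskhodnikova--Smith smooth sensitivity framework by showing that $S(x)=\max_k e^{-\beta k}\sfun^{(k)}(q,x)$ is a $\beta$-smooth upper bound on $LS_q$, then invoke their Laplace result with $\beta=\epsilon/(2\ln(2/\delta))$.

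The difference is one of explicitness. The paper's proof is a three-line appeal to Theorem~\ref{thm:local_sensitivity} together with Lemma~2.3, Lemma~2.9, and Corollary~2.4 of~\cite{nissim2011smooth}, treating the $\beta$-smoothness of $S$ as immediate from those citations. You instead unpack what that citation actually requires: Lemma~2.3 in Nissim et al.\ is stated for the true $A_q^{(k)}$, and its proof relies on the triangle-inequality fact $A_q^{(k)}(x)\le A_q^{(k+1)}(y)$ for neighbors. Because $\sfun^{(k)}$ is only an \emph{upper bound} on $A_q^{(k)}$, that fact does not transfer automatically, so you supply the missing ingredient---the shifted-monotonicity lemma $\sfun^{(k)}(q,x)\le\sfun^{(k+1)}(q,y)$---and sketch its proof by structural induction (base case via $\mf(a,t,x)\le\mf(a,t,y)+1$, join cases via monotonicity of sums/products/max of nonnegative quantities). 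This is exactly the right patch, and your induction sketch is sound. What you gain is a self-contained argument; what the paper gains is brevity by outsourcing the verification to the reader's familiarity with the smooth sensitivity construction.
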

\begin{proof}%
  By Theorem~\ref{thm:local_sensitivity} and Nissim et
  al.~\cite{nissim2011smooth} Lemma 2.3, $S$ is a $\beta$-smooth upper
  bound on the local sensitivity of $q$.
  By Nissim et al. Lemma 2.9, when the Laplace mechanism is used, a
  setting of $\beta = \frac{\epsilon}{2 \ln(2/\delta)}$ suffices to
  provide $(\epsilon, \delta)$-differential privacy.
  By Nissim et al. Corollary 2.4, the value released by the \flex
  mechanism is $(\epsilon, \delta)$-differentially private.
\end{proof}
\vspace*{2mm}

\subsection{Efficiently Calculating $S$}

The definition of the \flex mechanism (Definition~\ref{def:flex}) requires (in step 2) calculating the maximum smooth sensitivity over all distances $k$ between 0 and $n$ (the size of the true database). For large databases, this is inefficient, even if each sensitivity calculation is very fast.

The particular combination of elastic sensitivity with smooth sensitivity allows for an optimization. The elastic sensitivity $\sfun^{(k)}(q,x)$ grows as $k^{j(q)^2}$, where $j(q)$ is the number of joins in $q$ (see Lemma~\ref{lem:growth} below). For a given query, $j(q)$ is fixed. The smoothing factor ($e^{-\beta k}$), on the other hand, shrinks exponentially in $k$.

Recall that the smoothed-out elastic sensitivity at $k$ is $S(k) = e^{-\beta k} \sfun^{(k)}(q,x)$. We will show that to find $\max_{k=0,1,...,n} S(k)$, it is sufficient to find $\max_{k=0,1,...,m} S(k)$, where $m \geq \frac{j(q)^2}{\beta}$. Since $m$ is typically much smaller than $n$ (and depends on the query, rather than the size of the database), this observation yields significant computational savings.

\begin{lemma}
  For all relation expressions $r$ and databases $x$, where $j(r)$ is the number of joins in $r$, $\sfun_R^{(k)}(r,x)$ is a polynomial in $k$ of degree at most $j(r)^2$, and all coefficients are non-negative.
  \label{lem:growth}
\end{lemma}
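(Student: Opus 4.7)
The plan is to prove the lemma by structural induction on the relation expression $r$, strengthened by an auxiliary claim about the growth of $\mf_k$: for every attribute $a$, $\mf_k(a,r,x)$ is a polynomial in $k$ with non-negative coefficients whose degree is at most the number of tables appearing in $r$, and in particular at most $j(r)+1$. This auxiliary claim is easy to dispatch: $\mf_k(a,t,x) = \mf(a,t,x)+k$ has degree $1$ with non-negative coefficients; the join case multiplies two such polynomials, so the degrees and the table counts both add; and projection and selection leave the expression unchanged.

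Turning to the main claim for $\sfun_R^{(k)}$, the base case of a table gives the constant polynomial $1$, which has degree $0 = j(t)^2$. The projection, selection, and $\mbox{\emph{Count}}(r)$ cases are immediate from the inductive hypothesis (the last of these in fact reduces to the constant $1$). The only substantive case is a join $r = r_1 \bowtie r_2$, where $j(r) = j(r_1)+j(r_2)+1$. Expanding the definition produces up to three summands in the self-join case (and a $\max$ of two in the non-self-join case); by the inductive hypotheses on both children, each summand has degree at most one of $(j(r_1)+1)+j(r_2)^2$, $(j(r_2)+1)+j(r_1)^2$, or $j(r_1)^2+j(r_2)^2$. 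A short algebraic check shows that each of these is bounded by $(j(r_1)+j(r_2)+1)^2 = j(r)^2$, and since every factor has non-negative coefficients, so do the products and the resulting sum.

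The main obstacle will be the non-self-join branch, which uses $\max$: the max of two polynomials is not itself a polynomial. I plan to resolve this by observing that $\mf_k$ and $\sfun_R^{(k)}$ take non-negative values (they bound counts and stabilities), so $\max(A,B) \le A+B$ for the two non-negative expressions inside; replacing the $\max$ by this sum yields a genuine polynomial upper bound of the same degree, preserves non-negativity of coefficients, and still serves the purpose for which this lemma is invoked, namely to bound the growth of $e^{-\beta k}\sfun^{(k)}(q,x)$ so that only finitely many values of $k$ must be searched. With this convention in place, the induction closes cleanly.
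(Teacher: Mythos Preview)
Your approach---structural induction on $r$, with an auxiliary degree bound for $\mf_k$---is exactly what the paper's one-line proof (``follows from the recursive definitions \dots\ since each makes at most $j(r)$ recursive calls and only adds or multiplies the results'') gestures at, just carried out explicitly; the degree arithmetic you give for the join case is correct and more careful than anything the paper writes down.

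Where you go beyond the paper is in flagging the $\max$ in the non-self-join branch: as stated, the lemma is not literally true, since the pointwise maximum of two polynomials need not be a polynomial, and the paper's proof simply ignores this. Your proposed repair---replace $\max(A,B)$ by $A+B$---does produce a genuine polynomial with the required degree bound and non-negative coefficients, and it is adequate for the downstream use provided one is willing to smooth that polynomial bound rather than $\sfun^{(k)}$ itself (one then gets a slightly looser, but still valid, $\beta$-smooth upper bound on local sensitivity). A cleaner alternative that preserves $\sfun_R^{(k)}$ exactly is to note that it is a pointwise maximum of finitely many polynomials, each of degree at most $j(r)^2$ with non-negative coefficients; since each such polynomial $P_i$ has $e^{-\beta k}P_i(k)$ non-increasing for $k \ge j(r)^2/\beta$, so does their maximum, and the subsequent theorem goes through unchanged.
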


\begin{proof}
  Follows from the recursive definitions of $\sfun_R^{(k)}$ and $\mf_k$, since each makes at most $j(r)$ recursive calls and only adds or multiplies the results.
\end{proof}

\begin{theorem}
  For all queries $q$ and databases $x$, the smoothed-out elastic sensitivity at distance $k$ is $S(k) = e^{-\beta k} \sfun^{(k)}(q,x)$. For each $x$ and $q$, if $q$ queries a relation $r$, the maximum value of $S(k)$ occurs from $k=0$ to $k = \frac{j(r)^2}{\beta}$.
\end{theorem}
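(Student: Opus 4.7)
The plan is to show that the function $S(k) = e^{-\beta k}\sfun^{(k)}(q,x)$, viewed as a function of a real variable $k$, is non-increasing for all $k \geq j(r)^2/\beta$, which forces any maximum over integer $k \in \{0,1,\dots,n\}$ to lie in the initial range $[0, j(r)^2/\beta]$.

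First I would apply Lemma~\ref{lem:growth} to the relation $r$ queried by $q$: this writes $\sfun^{(k)}(q,x)$ as a polynomial $P(k) = \sum_{i=0}^{d} a_i k^i$ of degree $d \leq j(r)^2$ with non-negative coefficients $a_i \geq 0$. Differentiating $S(k) = e^{-\beta k} P(k)$ gives
\[ S'(k) = e^{-\beta k}\bigl(P'(k) - \beta P(k)\bigr), \]
so $S'(k) \leq 0$ if and only if $P'(k) \leq \beta P(k)$.

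The key lemma is a purely algebraic bound: for any polynomial $P$ of degree $d$ with non-negative coefficients and any $k > 0$,
\[ P'(k) \leq \frac{d}{k} P(k). \]
This follows by comparing term-by-term: $k P'(k) = \sum_{i=1}^{d} i a_i k^i$ while $d \cdot P(k) = \sum_{i=0}^{d} d a_i k^i$, and since each $a_i \geq 0$ and $i \leq d$, we have $i a_i \leq d a_i$ for every $i$, giving $k P'(k) \leq d P(k)$. Dividing by $k$ yields the bound (the $i=0$ term only strengthens it).

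Combining the two observations: whenever $k \geq d/\beta$, we have $\frac{d}{k} \leq \beta$, hence $P'(k) \leq \frac{d}{k}P(k) \leq \beta P(k)$, so $S'(k) \leq 0$. Since $d \leq j(r)^2$, we have $d/\beta \leq j(r)^2/\beta$, so $S$ is non-increasing on $[j(r)^2/\beta, \infty)$. Consequently the maximum of $S(k)$ over $k \in \{0,1,\dots,n\}$ must be attained at some $k^* \leq j(r)^2/\beta$, establishing the theorem. The only mildly subtle step is justifying the term-by-term comparison when some coefficients vanish, but non-negativity of all $a_i$ makes this immediate; there is no real obstacle beyond writing the algebra cleanly.
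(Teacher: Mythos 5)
Your proposal is correct and follows essentially the same route as the paper's proof: both invoke Lemma~\ref{lem:growth} to write the elastic sensitivity as a polynomial $P(k)$ of degree at most $j(r)^2$ with non-negative coefficients, differentiate $S(k)=e^{-\beta k}P(k)$, and argue term-by-term that the derivative is non-positive once $k \geq j(r)^2/\beta$. Your packaging of the comparison as the algebraic bound $kP'(k) \leq d\,P(k)$ is just a tidier restatement of the paper's observation that each term of $S'(k)$ carries the sign factor $(\lambda - i - \beta k)$.
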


\begin{proof}
  Let the constant $\lambda = j(r)^2$. By Lemma~\ref{lem:growth}, we have that for some set of constants $\alpha_0, \dots \alpha_{\lambda}$: 

  \[ \sfun_R^{(k)}(r,x) = \sum_{i = 0}^{\lambda} \alpha_i k^{\lambda -i} \]

  \noindent We therefore have that:

  \begin{align*}
    S(k) &= \frac{\sum_{i = 0}^{\lambda} \alpha_i k^{\lambda -i}}{e^{\beta k}}\\
    S'(k)&= {\sum_{i = 0}^{\lambda} \alpha_i e^{\beta k} k^{\lambda - i - 1} (\lambda - i - \beta k)}%
  \end{align*}

 Under the condition that $\alpha_i \geq 0$, each term in the numerator is $\leq 0$ exactly when $\lambda - i - \beta k \leq 0$. We know that $\alpha_i \geq 0$ by the definition of elastic sensitivity.
 
 We also know that $\lambda \geq 0$, because a query cannot have a negative number of joins.
 Thus the first term ($i = 0$) is $\leq 0$ exactly when $k \geq \frac{\lambda}{\beta}$ (we know that $\beta \geq 0$ by its definition). All of the other terms will also be $\leq 0$ when $k \geq \frac{\lambda}{\beta}$, because for $i > 0$, $\lambda - i - \beta k < \lambda - \beta k$.
 
 We can therefore conclude that $S'(k) \leq 0$ when $k > \frac{\lambda}{\beta}$, and so $S(k)$ is flat or decreasing for $k > \frac{j(r)^2}{\beta}$.
\end{proof}
 
\subsection{Privacy Budget \& Multiple Queries}

\flex does not prescribe a specific privacy budget management strategy,
allowing the use existing privacy budget methods as needed for specific applications.
Below we provide a brief overview of several approaches.

\paragraph{Composition techniques}
Composition for differential privacy~\cite{dwork2006calibrating} provides a simple way to support multiple queries: the $\epsilon$s and $\delta$s for these queries simply add up until they reach a maximum allowable budget, at which point the system refuses to answer new queries. The \emph{strong composition theorem}~\cite{dwork2010boosting} improves on this method to produce a tighter bound on the privacy budget used.
Both approaches are independent of the mechanism and thus apply directly to \flex.

\paragraph{Budget-efficient approaches}
Several approaches answer multiple queries together (i.e. in a single workload) resulting in more efficient use of a given privacy budget than simple composition techniques. These approaches work by posing counting queries through a low-level differentially private interface to the database. \flex can provide the low-level interface to support these approaches.

The \emph{sparse vector technique}~\cite{dwork2009complexity} answers only queries whose results lie above a predefined threshold. This approach depletes the privacy budget for answered queries only. The \emph{Matrix Mechanism}~\cite{li2010optimizing} and \emph{MWEM}~\cite{hardt2012simple} algorithms build an approximation of the true database using differentially private results from the underlying mechanism; the approximated database is then used to answer questions in the workload. Ding et al.~\cite{ding2011differentially} use a similar approach to release differentially private data cubes. Each of these mechanisms is defined in terms of the Laplace mechanism and thus can be implemented using \flex.

\section{Experimental Evaluation}
\label{sec:exper-eval}

\noindent We evaluate our approach with the following experiments:

\begin{itemize}[leftmargin=4mm]
\item We measure the performance overhead and success rate of \flex on
  real-world queries (Section~\ref{sec:allegro-eval}).

\item We investigate the utility of \flex-based differential privacy for real-world queries
  with and without joins (Section~\ref{sec:static-sensitivity-eval}).

\item We evaluate the effect of the privacy budget $\epsilon$ on the utility of
  \flex-based differential privacy (Section~\ref{sec:setting-parameters}).

\item We measure the utility impact of the public table optimization
  described in Section~\ref{sec:optim-joins-with} (Section~\ref{sec:impact_of_optimization}).

\item We compare \flex and wPINQ on a set of representative
counting queries using join (Section~\ref{sec:comp-with-wpinq}).

\end{itemize}

\paragraph{Experimental setup \& dataset} We ran all of our
experiments using our implementation of \flex with Java 8 on Mac
OSX. Our test machine was equipped with a 2.2 GHz Intel Core i7 and
8GB of memory. Our experiment dataset includes 9862 real queries executed
during October 2016. To build this dataset, we identified all counting queries
(including histogram queries) submitted during this time which
examined sensitive trip data. Our dataset also includes original 
results for each of these queries.

\begin{table}
  \centering
  
  \caption{Performance of \flex-based differential privacy.}
  \label{tab:eval-supported}

  \begin{tabular}{l r r}
    & \textbf{Avg (s)} & \textbf{Max (s)}\\
    \hline
    \emph{Original query}                      & 42.4  & 3,452 \\
    \emph{\flex: Elastic Sensitivity Analysis} & 0.007 & 1.2 \\
    \emph{\flex: Output Perturbation}          & 0.005 & 2.4 \\
    \hline
    \\
  \end{tabular}
\end{table}

\subsection{Success Rate and Performance of \flexlarge}
\label{sec:allegro-eval}

To investigate \flex's support for the wide 
range of SQL features in real-world queries,
we ran \flex's elastic sensitivity analysis on the queries
in our experiment dataset. We recorded the number of errors and classified each
error according to its type. %

In total, \flex successfully calculated elastic sensitivity for 76\% of the queries.
The largest group of errors is due to unsupported queries (14.14\%). These queries use features
for which our approach cannot compute an elastic sensitivity, as described in Section~\ref{sec:unsupported-queries}.
Parsing errors occurred for 6.58\% of queries. These errors result from incomplete grammar
definitions for the full set of SQL dialects used by the queries, and could be fixed
by expanding Presto parser's grammar definitions. The remaining errors (3.21\%) are due to 
other causes.

%
%

%
%
%
%
%
%
%

%
%
%
%

%

%
To investigate the performance of \flex-based differential privacy, we measured the total execution
time of the architecture described in Figure~\ref{fig:flex_arch} compared with the original query
execution time.
We report the results in Table~\ref{tab:eval-supported}.  Parsing and
analysis of the query to calculate elastic sensitivity took an average
of 7.03 milliseconds per query. The output perturbation step added an
additional 4.86 milliseconds per query.
By contrast, the average database execution time was 42.4 seconds per query, implying
an average performance overhead of 0.03\%.

\subsection{Utility of \flexlarge on Real-World Queries}
\label{sec:static-sensitivity-eval}

\newcommand{\targeted}{\emph{targeted\xspace}}
\newcommand{\nontargeted}{\emph{non-targeted\xspace}}

Our work is the first to evaluate differential privacy on a set of
real-world queries executed on real data. In contrast with previous evaluations of differential
privacy~\cite{DBLP:conf/ndss/BlockiDB16,DBLP:conf/sigmod/HayMMCZ16, DBLP:journals/pvldb/HuYYDCYGZ15},
our dataset includes a wide variety of real queries executed on real data.

We evaluate the behavior of \flex for this broad range of queries. Specifically, we measure the noise introduced to query results based on whether or not the query uses join and what percentage of the data is accessed by the query.

\paragraph{Query population size}
To evaluate the ability of \flex to handle both small and
large populations, we define a metric called \emph{population size}. The
population size of a query is the number of unique trips in the database
used to calculate the query results. The population size metric quantifies
the extent to which a query targets specific users or trips: a low
population size indicates the query is highly targeted, while a higher
population size means the query returns statistics over a larger
subgroup of records.

\begin{figure}%
  \centering \scriptsize
  \includegraphics[width=0.30\textwidth]{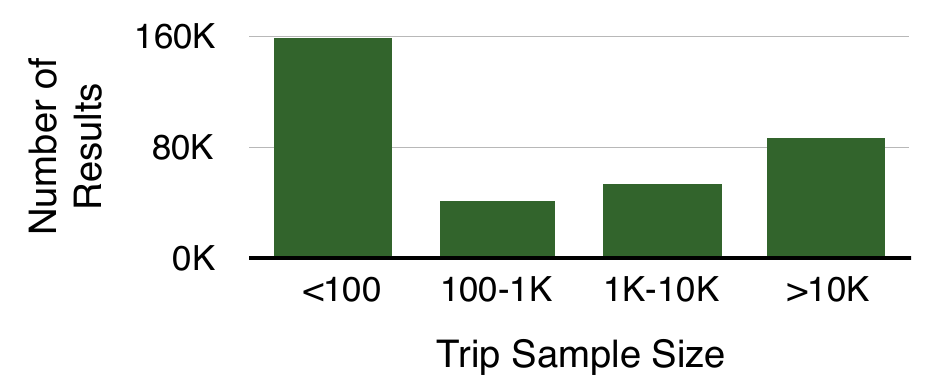}
  \caption{Distribution of population sizes for dataset queries.}
  \label{fig:sample_size_distribution}
\end{figure}

\ifvldb
\else
\begin{figure*}
  \centering
  \includegraphics[width=.95\textwidth]{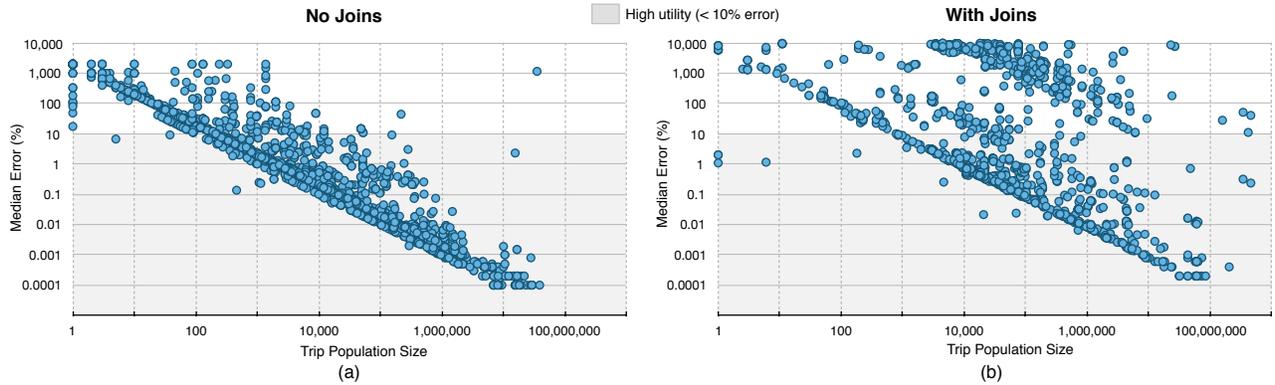}

  \caption{Median error vs population
    size for queries with no joins (a) and with joins (b).}
  \label{fig:sample_size_vs_noise}
\end{figure*}
\fi

Figure~\ref{fig:sample_size_distribution} summarizes the
distribution of population sizes of the queries in our dataset.
Our dataset contains queries with a wide variety of population sizes, reflecting the diversity of queries in the dataset.

\ifvldb
\begin{figure*}
  \centering
  \includegraphics[width=.99\textwidth]{sample_size_vs_noise}
  \caption{Median error vs population
    size for queries with no joins (a) and with joins (b).}
  \label{fig:sample_size_vs_noise}
\end{figure*}
\fi

\paragraph{Utility of \flex-based differential privacy}
We evaluate the utility of \flex by comparing the error
introduced by differential privacy on each query against the population size of
that query. For small population sizes, we expect our approach to protect
privacy by producing high error; for large population sizes, we expect our
approach to provide high utility by producing low error.

We used \flex to produce differentially private results
for each query in our dataset.
We report separately the results for queries with no joins and those with joins.
For each cell in the results, we calculated the relative (percent)
error introduced by \flex, as compared to the true
(non-private) results. Then, we calculated the median error of the
query by taking the median of the error values of all cells. For this
experiment, we set $\epsilon = 0.1$ and $\delta = n^{-\epsilon \ln
  n}$ (where $n$ is the size of the database), following Dwork and
Lei~\cite{dwork2009differential}.

Figure~\ref{fig:sample_size_vs_noise} shows the median error of each query against the population size of that query for queries with no joins (a) and with joins (b). The results indicate that \flex achieves its primary goal of supporting joins. Figure~\ref{fig:sample_size_vs_noise} shows similar trends with and without joins. In both cases the median error generally decreases with increasing population size; furthermore, the magnitude of the error is comparable for both. Overall, \flex provides high utility (less than 10\% error) for a majority of queries both with and without joins.

Figure~\ref{fig:sample_size_vs_noise}(b) shows a cluster of queries with higher errors but exhibiting the same error-population size correlation as the main group. The queries in this cluster perform many-to-many joins on private tables and do not benefit from the public table optimization described in Section~\ref{sec:optim-joins-with}. Even with this upward shift, a high utility is predicted for sufficiently large population size: at population sizes larger than 5 million the median error drops below 10\%.

Hay et al.~\cite{DBLP:conf/sigmod/HayMMCZ16} define the term \emph{scale-$\epsilon$ exchangeability} to describe the trend of decreasing error with increasing population size. The practical implication of this property is that a desired utility can always be obtained by using a sufficiently large population size.
For counting queries, a local sensitivity-based mechanism using Laplace noise is expected to exhibit scale-$\epsilon$ exchangeability. Our results provide empirical confirmation that \flex preserves this property, for both queries with and without joins, while calculating
an approximation of local sensitivity.

\begin{table}
  \centering
  \caption{Evaluated TPC-H queries.}
  \label{tbl:tpch_queries}
  \small
  \setlength\tabcolsep{1mm}
  \begin{tabular}{l l c}
    \textbf{Query}  & \textbf{Description} & \textbf{\# Joins} \\
    \hline
    ~~Q1         & Billed, shipped, and returned business & 0      \\
    ~~Q4         & Priority system status and customer satisfaction & 0    \\
    ~~Q13        & Relationship between customers and order size   & 1      \\
    ~~Q16        & Suppliers capable of supplying various part types & 1    \\
    ~~Q21        & Suppliers with late shipping times for required parts &  3     \\
    \hline
  \end{tabular}
\end{table}

\subsubsection{Utility of \flex on TPC-H benchmark}

We repeat our utility experiment using TPC-H~\cite{tpch},
an industry-standard SQL benchmark.
The source code and data for this experiment are available for download~\cite{experiment_code}.

The TPC-H benchmark includes synthetic data and queries simulating a workload for an archetypal industrial company.
The data is split across 8 tables (customers, orders, suppliers, etc.) and the
benchmark includes 22 SQL queries on these tables.

The TPC-H benchmark is useful for evaluating our system since the queries are specifically chosen to exhibit a high degree of complexity and to model typical business decisions~\cite{tpch}. This experiment measures the ability of our system to handle complex queries and provide high utility in a new domain.

\paragraph{Experiment setup}
We populated a database using the TPC-H data generation tool with the default scale factor of 1.
We selected the counting queries from the TPC-H query workload, resulting in five queries
for evaluation including three queries that use join.
The selected queries use SQL's GROUP BY operator and other SQL features including filters, order by, and subqueries. The selected queries are summarized in Table~\ref{tbl:tpch_queries}.
The remaining queries in the benchmark are not applicable for this experiment as they return raw data or use non-counting aggregation functions.

We computed the median population size and median error for each query using the same methodology as the previous experiment and privacy parameters $\epsilon=0.1$ and $\delta = n^{-\epsilon \ln n}$.
We marked as private every table containing customer or supplier information (customer, orders, lineitem, supplier, partsupp). The 3 tables containing non-sensitive metadata (region, nation, part) were marked as public.

\paragraph{Results}
The results are presented in Figure~\ref{fig:sample_size_vs_noise_tpch}. Elastic sensitivity
exhibits the same trend as the previous experiment: error decreases with increasing population size; this trend is observed for queries with and without joins, but error tends to be higher for queries with many joins.

\begin{figure}%
  \centering
  \includegraphics[width=0.47\textwidth]{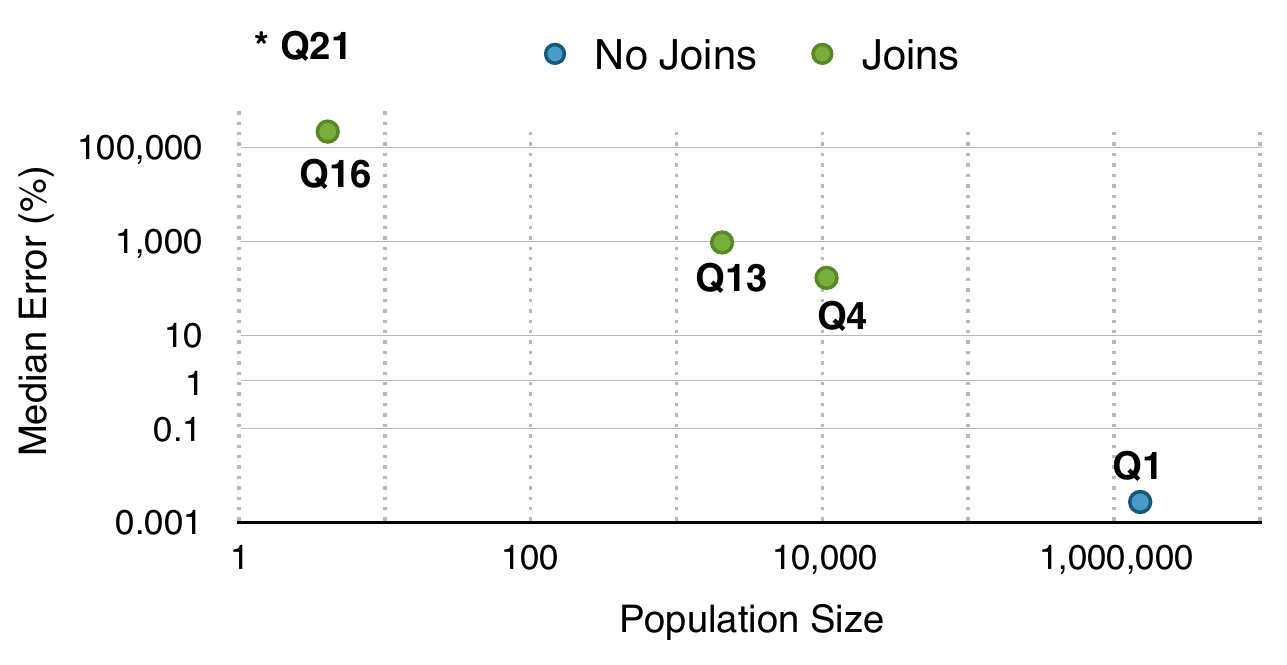}
  \caption{Median error vs population size (TPC-H queries).}
  \label{fig:sample_size_vs_noise_tpch}
\end{figure}

\subsubsection{Inherently sensitive queries}
Differential privacy is designed to provide good utility for
statistics about large populations in the data.
Queries with low population size, by definition, pose an inherent privacy risk to
individuals; differential privacy \emph{requires}
poor utility for their results in order to protect privacy.
As pointed out by 
Dwork and Roth~\cite{dwork2014algorithmic}, ``Questions about
specific individuals cannot be safely answered with accuracy, and
indeed one might wish to reject them out of hand.''

Since queries with low population size are inherently sensitive and
therefore not representative of the general class of queries of high
interest for differential privacy, we exclude queries with sample
size smaller than 100 in the remaining experiments. This ensures the
results reflect the behavior of \flex on queries for which high utility
may be expected.

\subsection{Effect of Privacy Budget}
\label{sec:setting-parameters}

In this section we evaluate the effect of the privacy budget on utility
of \flex-based differential privacy.
For each value of $\epsilon$ in the set $\{0.1, 1, 10\}$ (keeping
$\delta$ fixed at $n^{-\epsilon \ln n}$), we computed the median
error of each query, as in the previous experiment.

We report the results in Figure~\ref{fig:budget}, as a histogram
grouping queries by median error. As expected, larger values of
$\epsilon$ result in lower median error. When $\epsilon = 0.1$, \flex
produces less than 1\% median error for approximately half (49.8\%) of the less
sensitive queries in our dataset.

\begin{figure}%
  \centering
  \ifvldb
    \includegraphics[width=.42\textwidth]{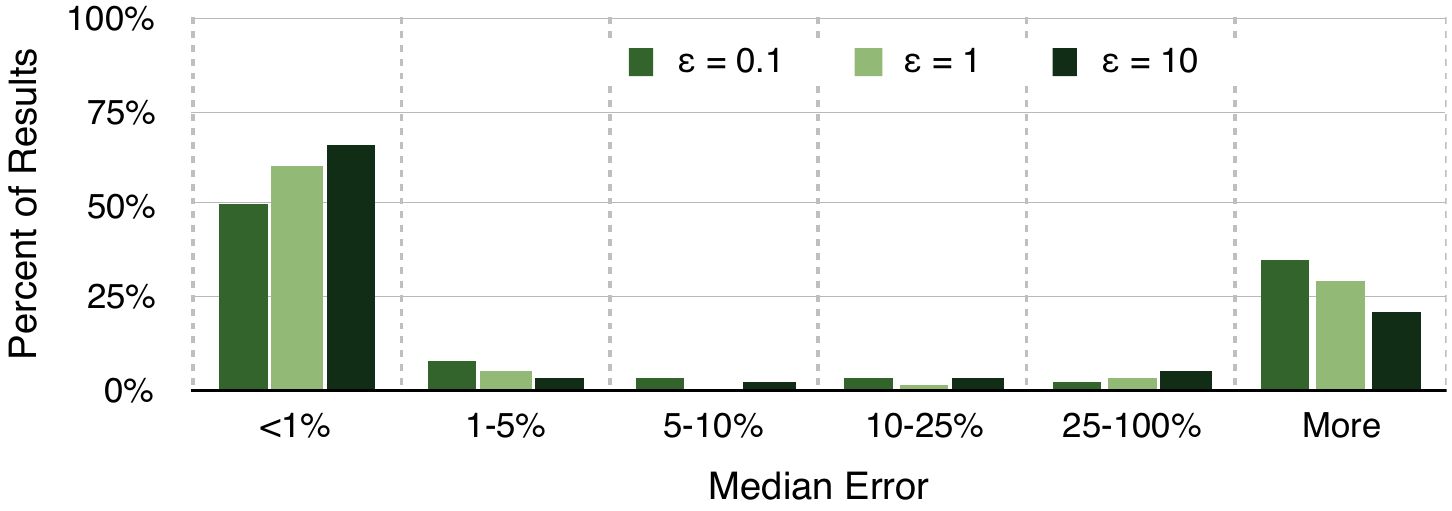}  
  \else
    \includegraphics[width=.45\textwidth]{epsilon_error.pdf}  
  \fi
  \caption{Effect of $\epsilon$ on median error.}
  \label{fig:budget}
\end{figure}

\paragraph{High-error queries}
The previous two experiments demonstrate that \flex
produces good utility for queries with high population size, but as
demonstrated by the number of queries in the ``More'' bin in
Figure~\ref{fig:budget}, \flex
also produces high error for some queries. 

To understand the root causes of this high error, we manually examined a random
sample of 50 of these queries and categorized them according to the
primary reason for the high error.

We summarize the results in
Table~\ref{fig:high-error-categories}.
The category \emph{filter on
  individual's data} (8\% of high error queries) includes queries that use a piece of
data specific to an individual---either to filter the sample with a
\texttt{Where} clause, or as a histogram bin.
For example, the
query might filter the set of trips by comparing the trip's driver ID
against a string literal containing a particular driver's ID, or it
might construct a histogram grouped by the driver ID, producing a
separate bin for each individual driver.
These queries are designed to return information specific to individuals.

The category \emph{low-population statistics} (72\% of high error queries)
contains queries with a
\texttt{Where} clause or histogram bin label that shrinks the set
of rows considered. A query to determine the success rate of a
promotion might restrict the trips considered to those within a small
city, during the past week, paid for using a particular type of credit
card, and using the promotion.
The analyst in this case may not intend
to examine the information of any individual, but since the query is
highly dependent on a small set of rows, the results may nevertheless
reveal an individual's information.

These categories suggest
that even queries with a population size larger than 100 can carry
inherent privacy risks, therefore differential privacy requires high
error for the reasons motivated earlier.

The third category (20\% of high error queries) contains queries that have
many-to-many joins with large maximum frequency metrics and which do not benefit
from any of the optimizations described in Section~\ref{sec:optim-joins-with}.
These queries are not necessarily inherently sensitive; the high error
may be due to a loose bound on local sensitivity arising from elastic 
sensitivity's design.

\begin{figure}
  \centering
  \ifvldb
  \includegraphics[width=0.46\textwidth]{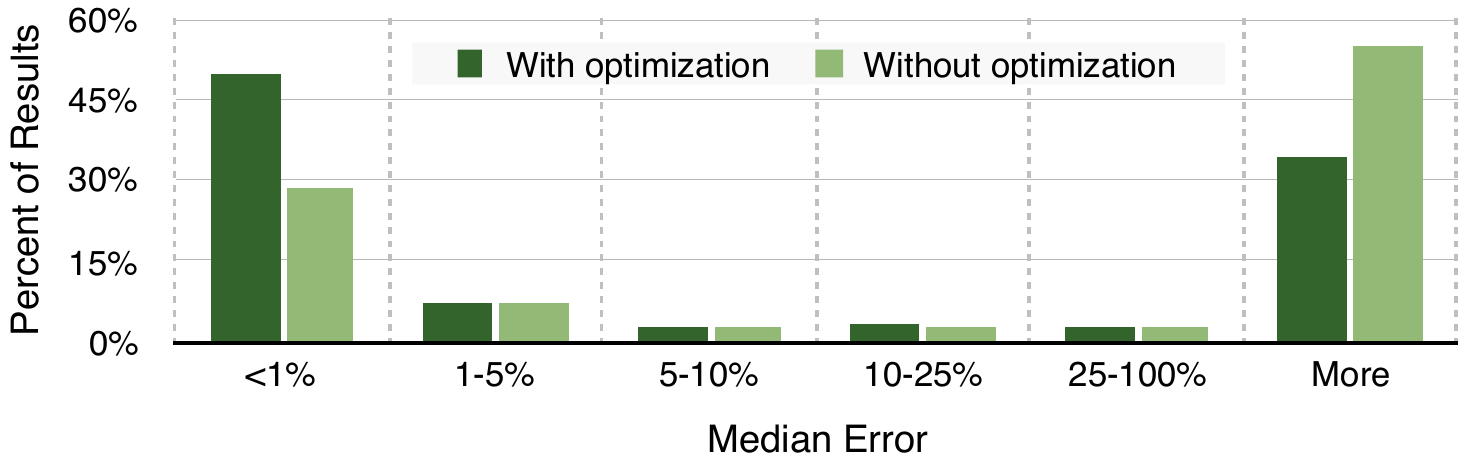}
  \else
  \includegraphics[width=0.47\textwidth]{effect_of_optimization.pdf}
  \fi
  \caption{Impact of public table optimization.}
  \label{fig:effect_of_optimization}
\end{figure}

\subsection{Impact of Public Table Optimization}
\label{sec:impact_of_optimization}

Section~\ref{sec:optim-joins-with} describes an optimization for queries joining on public tables. We measure the impact of this optimization on query utility by calculating median error introduced by \flex for each query in our dataset with the optimization enabled and disabled. We use the same experimental setup described in the previous section, with $\epsilon=0.1$ and $\delta = n^{-\epsilon \ln n}$. As before, we exclude queries with population size less than 100.

The optimization is applied to 23.4\% of  queries in our dataset.
Figure~\ref{fig:effect_of_optimization} shows the utility impact of the optimization across all queries. The optimization increases the percentage of queries with greatest utility (error less than 1.0\%) from 28.5\% to 49.8\%. The majority of the increase in high-utility queries come from the lowest-utility bin (error greater than 100\%) while little change is seen in the mid-range error bins. This suggests our optimization is most effective on queries which would otherwise produce high error, optimizing more than half of these queries into the 1\% error bin.

\ifvldb
\begin{table}
\else
\begin{table}[b]
\fi
\centering
\caption{Manual categorization of queries with high error.}
\label{fig:high-error-categories}
\small
  \begin{tabular}{l l}
    \textbf{Category} & \textbf{Percent} \\
    \hline
    Filters on individual's data & 8\% \\
    Low-population statistics & 72\% \\
    Many-to-many \texttt{Join} causes high \staticsensitivity & 20\% \\
    \hline
  \end{tabular}
  
\end{table}

\todo{make sure table names are consistent}

\newcommand{\proga}{%
Count distinct drivers who have completed a trip
in San Francisco yet enrolled as a driver
in a different city.
}
\newcommand{\tablesa}{trips, drivers}

\newcommand{\progb}{%
Count driver accounts that are active and were 
tagged after June 6 as duplicate accounts.
}
\newcommand{\tablesb}{users, user\_tags}

\newcommand{\progc}{%
Count motorbike drivers in Hanoi
who are currently active and have completed 10 or more trips.
}
\newcommand{\tablesc}{drivers, analytics}

\newcommand{\progd}{%
Histogram: Daily trips by city (for all cities) on Oct. 24, 2016.
}
\newcommand{\tablesd}{trips, cities}

\newcommand{\proge}{%
Histogram: Total trips per driver in Hong Kong between Sept. 9 and Oct. 3, 2016.
}
\newcommand{\tablese}{trips, drivers}

\newcommand{\progf}{%
Histogram: Drivers by different thresholds of total completed trips for drivers registered in Syndey, AUS who have completed a trip within the past 28 days.
}
\newcommand{\tablesf}{drivers, analytics}
 
\begin{table*}

  \centering
\caption{Utility comparison of wPINQ and \flex for selected set of representative counting queries using join.}
  \label{tbl:wpinq_comparison}
  \scriptsize

{\renewcommand{\arraystretch}{1.25}
\begin{tabular}{lp{10.5cm}lccc}
     & \multirow{3}{*}{\bf \hspace{-5mm}Program} & \multirow{3}{*}{\textbf{Joined tables}} &   \textbf{\scriptsize Median} & \multicolumn{2}{c}{\textbf{Median Error (\%)}}     \\
              &       &                 &    \textbf{\scriptsize Population}    & \multirow{2}{*}{wPINQ} & Elastic \\
              &       & & \textbf{\scriptsize Size} & & Sensitivity              \\
  \hline
     1. & \proga  &    \tablesa & 663 &  45.9   &  22.6 \\
     \hline
     2. & \progb  &    \tablesb & 734 &  71.5   &  ~2.8 \\
          \hline
     3. & \progc  &    \tablesc & 212 &  51.4   &  ~4.72 \\     
          \hline
     4. & \progd  &    \tablesd & 87  &  11.5   &  23 \\
          \hline
     5. & \proge  &    \tablese & 1   &  974~    & 6437 \\
          \hline
     6. & \progf  &    \tablesf & 72  &  51.5   &  27.8 \\     
  \hline
\end{tabular}}

\end{table*}

\subsection{Comparison with wPINQ}
\label{sec:comp-with-wpinq}

We aim to compare our approach to alternative differential privacy mechanisms with equivalent support for real-world queries. Of the mechanisms listed in Section~\ref{sec:exist-diff-priv}, only wPINQ supports counting queries with the full spectrum of join types.

Since wPINQ programs are implemented in C\#, we are unable to run wPINQ directly on our SQL query dataset. Instead we compare the utility between the two mechanisms for a selected set of representative queries. The precise behavior of each mechanism may differ for every query, however this experiment provides a relative comparison of the mechanisms for the most common cases.

\paragraph{Experiment Setup}
We selected a set of representative queries based on the most common equijoin patterns (joined tables and join condition) across all counting queries in our dataset. We identify the three most common join patterns for both histogram and non-histogram queries and select a random query representing each. Our six selected queries collectively represent 8.6\% of all join patterns in our dataset.

For each selected query we manually transcribe the query into a wPINQ program.
To ensure a fair comparison, we use wPINQ's \texttt{select} operator rather than the \texttt{join} operator for joins on a public table. This ensures that no noise is added to protect records in public tables, equivalent to the optimization described in Section~\ref{sec:optim-joins-with}.

Our input data for these programs includes all records from the cities table, which is public, and a random sample of 1.5 million records from each private table (it was not feasible to download the full tables, which contain over 2 billion records). We execute each program 100 times with the wPINQ runtime~\cite{wpinq_download}.

To obtain baseline (non-differentially private) results we run each SQL query on a database populated with only the sampled records. For elastic sensitivity we use max-frequency metrics calculated from this sampled data. We compute the median error for each query using the methodology described in the previous section, setting $\epsilon=0.1$ for both mechanisms. 

Table~\ref{tbl:wpinq_comparison} summarizes the queries and median error results. \flex provides lower median error than wPINQ for programs 1, 2, 3 and 6---more than 90\% lower for 2 and 3 and nearly 50\% lower for programs 1 and 6. \flex produces higher error than wPINQ for programs 4 and 5.

In the case of program 5, both mechanisms produce errors above 900\%. The median population size of 1 for this program indicates that our experiment data includes very few trips \emph{per driver} that satisfy the filter conditions. Elastic sensitivity provides looser bounds on local sensitivity for queries that filter more records, resulting in a comparably higher error for queries such as this one. Given that such queries are inherently sensitive, a high error (low utility) is required for \emph{any} differential privacy mechanism, therefore the comparably higher error of \flex is likely insignificant in practice.

Proserpio et al.~\cite{proserpio2014calibrating} describe a post-processing step for generating synthetic data by using wPINQ results to guide a Markov-Chain Monte Carlo simulation. The authors show that this step improves utility for graph triangle counting queries when the original query is executed on the synthetic dataset. 
While this strategy may produce higher utility than the results presented in Table~\ref{tbl:wpinq_comparison},
we do not evaluate wPINQ with this additional step since it is not currently automated.

\section{Related Work}
\label{sec:related-work}

Differential privacy was originally proposed by
Dwork~\cite{dworkdifferential2006, dwork2006calibrating,
  dwork2008differential}, and the reference by Dwork and
Roth~\cite{dwork2014algorithmic} provides an excellent general
overview of differential privacy.
Much of this work focuses on mechanisms for releasing the results of
specific algorithms. Our focus, in contrast, is on a general-purpose mechanism
for SQL queries that supports general equijoins.
We survey the existing general mechanisms that support join in
Section~\ref{sec:exist-diff-priv}.

 Lu et al.~\cite{lu2014generating} propose a mechanism for generating differentially private synthetic data such that queries with joins  have similar \emph{performance characteristics}, but not necessarily similar answers, on the synthetic and true databases.
However, Lu et al. do not propose a mechanism for answering queries with differential privacy. As such, it does not satisfy either 
of the two requirements in Section~\ref{sec:study-requirements}. 

Airavat~\cite{roy2010airavat} enforces differential privacy for
arbitrary MapReduce programs, but requires the analyst to bound the range of possible
outputs of the program, and clamps output values to lie within that
range.
Fuzz~\cite{gaboardi2013linear,haeberlen2011differential} enforces
differential privacy for functional programs, but does not support
one-to-many or many-to-many joins.

Propose-test-release~\cite{dwork2009differential} (PTR) is a framework
for leveraging local sensitivity that works for arbitrary real-valued
functions.
PTR requires (but does not define) a way to
calculate the local sensitivity of a function. Our work on elastic
sensitivity is complementary and can enable the use of PTR by
providing a bound on local sensitivity.

Sample \& aggregate~\cite{nissim2007smooth,nissim2011smooth} is a data-dependent
framework that applies to all statistical estimators. It works by
splitting the database into chunks, running the query on each chunk,
and aggregating the results using a differentially private
algorithm. Sample \& aggregate cannot support joins, since splitting
the database breaks join semantics, nor does it support queries that
are not statistical estimators, such as counting queries.
GUPT~\cite{mohan2012gupt} is a practical system that leverages the
sample \& aggregate framework to enforce differential privacy for
general-purpose analytics.

The \emph{Exponential Mechanism}~\cite{mcsherry2007mechanism} supports queries that produce categorical (rather than numeric) data. It works by randomly selecting from the possible outputs according to a \emph{scoring function} provided by the analyst.
Extending \flex to support the exponential mechanism would require specification of the scoring function and a means to bound its sensitivity.

A number of less-general mechanisms for performing specific graph
analysis tasks have been
proposed~\cite{hay2009accurate,sala2011sharing,karwa2011private,kasiviswanathan2013analyzing}.
These tasks often involve joins, but the mechanisms used to handle
them are specific to the task and are not applicable for
general-purpose analytics. For example, the recursive
mechanism~\cite{Chen:2013:RMT:2463676.2465304} supports general
equijoins in the context of graph analyses, but is restricted to
monotonic queries and in the worst case, runs in time exponential in
the number of participants in the database.

Kifer et al.~\cite{kifer2011no} point out that database constraints (such as uniqueness of a primary key) can lead to leaks of private data. Such constraints are common in practice, and raise concerns for \emph{all} differential privacy approaches. Kifer et al. propose increasing sensitivity based on the specific constraints involved, but calculating this sensitivity is computationally hard. Developing a tractable method to account for common constraints, such as primary key uniqueness, is an interesting target for future work.

%
%
%
%
%

%
%
%
%
%
%
%
%

%
%
%
%
%
%
%
%
%
%
%
%
%
%
%
%
%
%
%
%
%
%
%
%
%
%
%
%
%
%

%
%
%
%
%
%
%
%

 %
\section{Conclusion}
\label{sec:conclusion}

This paper takes a first step towards practical differential
privacy for general-purpose SQL queries.
To meet the requirements of real-world SQL queries, we
proposed \staticsensitivity, the first efficiently-computed
approximation of local sensitivity that supports joins.
We have released an open-source tool for computing elastic sensitivity of SQL queries~\cite{github_repo}.
We use \staticsensitivity to build \flex, a
system for enforcing differential privacy for SQL queries. We evaluated
\flex on a wide variety of queries,
demonstrating that \flex can support real-world queries
and provides high utility on a majority of queries with large population sizes.

\section*{Acknowledgments}

The authors would like to thank Abhradeep Guha Thakurta, Om Thakkar, Frank McSherry, Ilya Mironov and the anonymous reviewers for their
helpful comments.
This work was supported by the Center for Long-Term Cybersecurity, and DARPA \& SPAWAR under contract N66001-15-C-4066. The U.S. Government is authorized to reproduce and distribute reprints for Governmental purposes not withstanding any copyright notation thereon. The views, opinions, and/or findings expressed are those of the author(s) and should not be interpreted as representing the official views or policies of the Department of Defense or the U.S. Government.

\ifvldb
\newpage
\else
\fi

\bibliographystyle{abbrv}
\bibliography{refs}

\end{document}